\def\BibTeX{{\rm B\kern-.05em{\sc i\kern-.025em b}\kern-.08em
    T\kern-.1667em\lower.7ex\hbox{E}\kern-.125emX}}
\newtheorem{theorem}{Theorem}
\newtheorem{definition}{Definition}
\newtheorem{lemma}{Lemma}
\newtheorem{corollary}{Corollary}
\newtheorem{remark}{Remark}
\begin{document}

\title{
% Trade-off-aware
System Design Approach for Control of Differentially Private \\Dynamical Systems 
% Unifying Control and Differential Privacy in Dynamical Systems via Integrated Design
\author{Raman Goyal, Dhrubajit Chowdhury, and Shantanu Rane
\thanks{R. Goyal, D. Chowdhury, and S. Rane are with Palo Alto Research Center - Part of SRI International, Palo Alto, CA, USA. 
{ \tt\small \{raman.goyal, dhruba.chowdhury, shantanu.rane\}@sri.com, }}}
}

\maketitle

\begin{abstract}
This paper introduces a novel approach to concurrently design dynamic controllers and correlated differential privacy noise in dynamic control systems. An increase in privacy noise increases the system's privacy but adversely affects the system's performance. Our approach optimizes the noise distribution while shaping closed-loop system dynamics such that the privacy noise has the least impact on system performance and the most effect on system privacy. We further add privacy noise to both control input and system output to privatize the system's state for an adversary with access to both communication channels and direct output measurements. 
% A privacy metric quantifies uncertainty in state estimation against intelligent adversaries leveraging system dynamics.  
The study also suggests tailored privacy bounds for different states, providing a comprehensive framework for jointly optimizing system performance and privacy in the context of differential privacy.
\end{abstract}

\section{Introduction}
In today's increasingly interconnected and data-driven world, it has become important for connected entities to share information with each other to work efficiently. This applies not just to individuals, but also to Cyber-Physical Systems (CPS) in various sectors, including industrial control systems, power grids, financial markets, and commercial and military communication networks. This pervasive data sharing has also brought forth heightened concerns regarding system security, safety and privacy. With private entities, government organizations, and adversaries engaging in extensive data collection and analysis, the risk of exposing sensitive information has significantly increased, posing potential harm to both individuals and critical systems. This has led to the development of mechanisms providing different kinds of privacy guarantees. Examples include differential privacy, information-theoretic privacy, and privacy based on secure multiparty computation.

Differential privacy was originally designed to protect the data of individuals in static databases, but its application has expanded to address the privacy challenges posed by dynamic and interconnected data ecosystems, such as Cyber-Physical Systems (CPS) and Internet of Things (IoT) devices \cite{dwork2014algorithmic,dwork2008differential}. At a basic level, a differentially private mechanism ensures that the results of a query remain approximately unchanged if data belonging to any single user, or a single row, in the database are modified \cite{han2018privacy}. Informally, differential privacy makes similar data appear \textit{approximately} indistinguishable from one another \cite{yazdani2022differentially}. The most important feature of differential privacy is its protection from post-processing or its robustness in the presence of side information \cite{hassan2019differential}.  However, there is a price associated with making the system differentially private. Differential privacy works by adding noise to the system which leads to a degradation in system performance both in static and dynamic cases \cite{hassan2019differential,le2013differentially}. 

In recent years, researchers have extended the work on differential privacy for static databases to differential privacy for dynamic filters \cite{le2013differentially}, control and dynamical systems \cite{han2018privacy}, differentially private LQ Control \cite{yazdani2022differentially}, multi-agent formation control \cite{hawkins2022differentially}, and Differentially private distributed constrained optimization \cite{han2016differentially}. 
In differentially private LQ control \cite{yazdani2022differentially}, the authors consider a multi-agent system described using linear system dynamics and add privacy noise such that every agent’s state trajectory is made approximately indistinguishable from all other state trajectories. The paper provides lower and upper bounds on mean square error (MSE) in state estimation for some minimum and maximum privacy noise among agents, where the combined state is estimated using a standard Kalman filter while designing an LQG control for the overall system. The paper further provides guidelines for choosing the privacy level $\epsilon_i$ to bound the MSE in the cloud’s state estimates and further provides the cost of privacy in terms of the increase of the quadratic cost. 
Kawan and Cao \cite{kawano2020design,kawano2018differential} show that the Gaussian mechanism evaluates the maximum eigenvalue of the input observability Gramian and thus the addition of even small noise is enough to make the less input observable Gaussian mechanism highly differentially private.

In this paper, we consider the joint design of dynamic controller and differentially private noise (correlated noise with different variances across channels) such that the system performance loss is minimized for a given privacy metric or the privacy metric is maximized for a given system performance. The idea is to find the optimal privacy noise distribution and simultaneously design the closed-loop system dynamics such that the correlated noises enter the system through channels that have the least impact on system performance and maximize system privacy. It can be understood as the following: the larger noise would only be added through the input/output channels whose effect on the system state has been minimized by designing the closed-loop poles. We design a dynamic controller that directly processes the controller states to generate the control signal instead of first estimating the states and then using it to obtain the control input. We assume a smart adversary that will develop an optimal estimator to generate individual signals for more accurate state estimation by leveraging the additional information about the system dynamics. 
We consider two kinds of adversaries, one that has access to the communication channel and another that has direct access to the measurements. We express a privacy metric in the presence of smart adversaries in terms of uncertainty in the estimation of individual states. We will further design different privacy bounds in different states as some states might need stricter privacy guarantees than other states. This will also be considered for bounding the system performance across different states based on the underlying system.

The organization of the paper can be laid out as follows:
Section~\S \ref{s:DP} provides the necessary background on differential privacy and the notation required for the rest of the paper.  
Section~\S \ref{s:prob} gives the relationship between differential privacy and error in state estimate and then formulates the final design problem. 
Section~\S \ref{s:design} elaborates the system design approach for a general dynamic controller and correlated differential privacy input and output noises and provides the solution as a convex optimization problem.
Section~\S \ref{s:sim} gives simulation results for differential privacy of a networked power distribution system with load frequency control under unknown power demand and \S \ref{s:conc} provides the final concluding remarks along with the future work.

% XXXXXXXXXXXXXXXXXXXXXXXXXXXXX
% XXXXXXXXXXXXXXXXXXXXXXXXXXXXX

% Relationship of the lower bound on noise variance obtained using differential privacy on Kalman filter error covariance $(\sigma$ vs $\Sigma)$

% Next Step

% Look at the time-varying privacy noise to be added to the system for LTV systems

% Look at the $\epsilon-\delta$ differential privacy derivation to see different noise channels

% XXXXXXXXXXXXXXXXXXXXXXXXXXXXX
% XXXXXXXXXXXXXXXXXXXXXXXXXXXXX

% Funnel Control (Performance)
% New Privacy Definition for nonlinear systems
% Conservative design of noise bounds using estimation error through Lyapunov analysis
% mean square error 

% Differential Privacy for nonlinear systems

% Literature on the notion of privacy in nonlinear systems (Read Cao)
% Meaning of privacy metric and use of stable linear systems to add privacy noise
% Novelty privacy vs performance for nonlinear systems
% Linear measurements $y=Cx$ for nonlinear systems to introduce DP
% What is differential privacy for nonlinear systems?

% Next Steps:
% Multiple funnels, multiple noise channels
% Design of privacy noise in real-time at control input using the current distance of tracking error from funnel boundary (Funnel MPC)

\vspace{-1mm}
\section{Review of Differential Privacy} \label{s:DP}
In this section, we review the basic definitions and define the Gaussian mechanism used to enforce differential privacy in dynamical systems. 
We define the expectation operator by $\mathbb{E}[\cdot]$ and $\mathcal{N}({x}, { Y})$ denotes the Gaussian distribution with mean ${ x}$ and covariance ${Y}$. The diagonal matrix generated from a vector ${ x}$ is denoted as $\operatorname{diag}({x})$ and the block diagonal matrix is denoted as $\operatorname{blkdiag}({ Y}_1, { Y}_2, \cdots ,{Y}_3)$. The symbol $O$ defines a zero matrix with suitable dimensions and $I$ defines the unit matrix of appropriate dimensions. The notations ${ X \succ 0}$ and ${ Y \succeq 0}$ denote the symmetric positive definite and symmetric positive semidefinite matrices, respectively. 

% SINGLE AGENT
Let us consider agent's state trajectories of the form $x=(x(1), x(2), \ldots)$, where $x(k) \in \mathbb{R}^{n_x}$ and $\|x(k)\|_2<\infty$ for all $k$, and let us denote the set of all such sequences by $x \in {\ell}_2^{n_x}$. Let us define our adjacency relation over ${\ell}_2^{n_x}$.

\begin{definition}
    (Adjacency for trajectories): Let us choose $\beta > 0$ as the adjacency parameter and $v, w \in {\ell}_2^{n_x}$ as two trajectories that are adjacent if $\left\|v-w\right\|_{\ell_2} \leq \beta$. We write $\operatorname{Adj}_{\beta}\left(v,w\right)=1$ if $v$ and $w$ are adjacent, and $\operatorname{Adj}_{\beta}\left(v, w\right)=0$, otherwise.
\end{definition}

This adjacency relation requires that an agent's state trajectory be made approximately indistinguishable within distance $\beta$ from all other state trajectories. Let us consider that the agent's output signal is of dimension $n_y$ at each point in time and is in the set ${\ell}_2^{n_y}$. Next, we define the sensitivity of a dynamical system.

\begin{definition}
(Sensitivity): The p-norm sensitivity of a system $\mathcal{G}$ is the greatest distance between two output trajectories that correspond to adjacent state trajectories:
\begin{align}
\nonumber  \Delta_p\mathcal{G} :=\sup_{x,\tilde{x}|\operatorname{Adj}_B (x,\tilde{x})=1} \|\mathcal{G}(x)-\mathcal{G}(\tilde{x})\|_{p}.
\end{align}
\end{definition}

Now, we define differential privacy for dynamic systems (see \cite{le2013differentially} for a formal construction).

\begin{definition}
(Differential privacy for trajectories): Let $\epsilon>0$ and $\delta \in(0,1 / 2)$ be given. A mechanism $\mathcal{M}(\cdot) \in \ell_2^{n_y} $ is $\left(\epsilon, \delta\right)$-differentially private if, for all adjacent $x, x^{\prime} \in \ell_2^{n_x}$, we have:
\begin{align}
    \nonumber \mathbb{P}\left[\mathcal{M}\left(x\right) \in S\right] \leq e^{\epsilon} \mathbb{P}\left[\mathcal{M}\left(x^{\prime}\right) \in S\right]+\delta \text { for all } S \in \Sigma_2^{n_y} .
\end{align}
\end{definition}

We now define the Gaussian mechanism. 

\begin{lemma}
    (Gaussian mechanism; \cite{le2013differentially}): Let us use privacy parameters $\epsilon>0$ and $\delta \in(0,1 / 2)$ and adjacency parameter $\beta>0$. Let $\mathcal{G}$ denote a dynamical system and $\Delta_2\mathcal{G}$ denote its 2-norm sensitivity. Then the Gaussian mechanism $\mathcal{M} = \mathcal{G}(x)+v^p$ makes the system $\left(\epsilon, \delta\right)$-differentially private with respect to $\operatorname{Adj}_{\beta}$, if $v^p(k) \sim \mathcal{N}\left(0, \sigma^2 I_{n_y}\right)$, and $\sigma \geq  \Delta_2\mathcal{G}~ \beta ~\kappa\left(\delta, \epsilon\right)$, where   $\kappa\left(\delta, \epsilon\right)=\frac{1}{2 \epsilon}\left(K_{\delta}+\sqrt{K_{\delta}^2+2 \epsilon}\right)$, with $K_{\delta}:=\mathcal{Q}^{-1}\left(\delta\right)$,  $\mathcal{Q}$ representing the Gaussian tail integral.
\end{lemma}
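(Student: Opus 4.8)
The plan is to follow the classical privacy-loss analysis of the Gaussian mechanism, reducing the trajectory-valued comparison to a one-dimensional Gaussian tail bound. First I would fix an arbitrary pair of adjacent inputs $x, x' \in \ell_2^{n_x}$ and write $\mu := \mathcal{G}(x)$ and $\mu' := \mathcal{G}(x')$, so that by the definition of $2$-norm sensitivity together with the adjacency bound the $\ell_2$ distance between the means satisfies $\|\mu - \mu'\|_2 \le \Delta_2\mathcal{G}\,\beta$. Under the mechanism $\mathcal{M}(x) = \mathcal{G}(x) + v^p$ with $v^p \sim \mathcal{N}(0,\sigma^2 I)$, the output is Gaussian with mean $\mu$ (resp. $\mu'$) and covariance $\sigma^2 I$, and the object that controls $(\epsilon,\delta)$-privacy is the privacy-loss random variable $L(z) = \ln\!\big(p_\mu(z)/p_{\mu'}(z)\big)$ evaluated at $z \sim \mathcal{N}(\mu,\sigma^2 I)$.

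The key simplification I would exploit is that, for two isotropic Gaussians, $L(z)$ depends on $z$ only through its projection onto the difference direction $d := \mu - \mu'$: expanding the log-ratio of the two densities yields a quantity affine in $\langle z, d\rangle$, so the analysis collapses to a scalar comparison and the dimension of $\ell_2^{n_y}$ becomes irrelevant. Substituting $z = \mu + v^p$ and computing the first two moments, I would show that $L$ is itself Gaussian with mean $\|d\|_2^2/(2\sigma^2)$ and variance $\|d\|_2^2/\sigma^2$, so that the variance is exactly twice the mean.

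Next I would invoke the standard sufficient condition that $\mathcal{M}$ is $(\epsilon,\delta)$-differentially private whenever $\mathbb{P}_{z\sim\mathcal{N}(\mu,\sigma^2 I)}[\,L(z) > \epsilon\,] \le \delta$ for every adjacent pair, which follows by splitting the integral defining $\mathbb{P}[\mathcal{M}(x)\in S]$ over the events $\{L \le \epsilon\}$ and $\{L > \epsilon\}$. Writing $m := \|d\|_2^2/(2\sigma^2)$ and standardizing the Gaussian $L$, this tail probability equals $\mathcal{Q}\!\big((\epsilon - m)/\sqrt{2m}\big)$, so the requirement becomes $(\epsilon - m)/\sqrt{2m} \ge K_\delta = \mathcal{Q}^{-1}(\delta)$. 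Treating $\sqrt{m}$ as the unknown, this is a quadratic inequality whose admissible root, after substituting $m = \|d\|_2^2/(2\sigma^2)$ and rationalizing the denominator, yields precisely $\sigma \ge \|d\|_2\,\kappa(\delta,\epsilon)$ with $\kappa(\delta,\epsilon) = \tfrac{1}{2\epsilon}(K_\delta + \sqrt{K_\delta^2 + 2\epsilon})$. Since this lower bound is increasing in $\|d\|_2$, taking the worst case $\|d\|_2 = \Delta_2\mathcal{G}\,\beta$ gives the stated condition uniformly over all adjacent inputs.

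I expect the main obstacle to be the measure-theoretic bookkeeping in the infinite-dimensional trajectory space $\ell_2^{n_y}$: the densities $p_\mu, p_{\mu'}$ do not exist with respect to a Lebesgue reference measure, so the log-ratio must be justified through the Cameron--Martin absolute-continuity theory for Gaussian measures, and one must confirm that $d$ lies in the Cameron--Martin space so that $L$ is well defined. The projection argument of the second step is what resolves this cleanly, since it pushes the computation forward onto the finite-dimensional span of $d$; making that reduction rigorous, and verifying the sufficient-condition lemma, is the part that deserves the most care, whereas the quadratic-inequality manipulation is routine.
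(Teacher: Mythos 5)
The paper states this lemma without proof, importing it directly from the cited reference (Le Ny and Pappas), so there is no in-paper argument to compare against; your reconstruction is the standard privacy-loss analysis from that source and is correct: the reduction of the log-likelihood ratio to an affine function of $\langle z, \mu-\mu'\rangle$, the sufficient condition $\mathbb{P}[L>\epsilon]\le\delta$, and the quadratic inequality in $\|d\|_2/\sigma$ whose admissible root rationalizes to $\kappa(\delta,\epsilon)=\tfrac{1}{2\epsilon}(K_\delta+\sqrt{K_\delta^2+2\epsilon})$ are all exactly right, and your flag about justifying densities on $\ell_2^{n_y}$ via Cameron--Martin (or, as in the original reference, via cylinder sets and finite-horizon truncation) identifies the one genuinely delicate point. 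One caveat you inherited from the lemma as stated rather than introduced yourself: under the paper's Definition 2, $\Delta_2\mathcal{G}$ is already a supremum over $\beta$-adjacent pairs, so the worst-case mean shift is $\|\mu-\mu'\|_2\le\Delta_2\mathcal{G}$ and the extra factor of $\beta$ in $\sigma\ge\Delta_2\mathcal{G}\,\beta\,\kappa(\delta,\epsilon)$ double-counts the adjacency radius unless $\Delta_2\mathcal{G}$ is read as a per-unit-adjacency gain (as in the paper's Lemma 2, where it becomes $\|\mathcal{G}\|_\infty$); your step bounding $\|\mu-\mu'\|_2\le\Delta_2\mathcal{G}\,\beta$ silently adopts the latter reading and would need that convention made explicit.
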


\section{Problem Formulation}\label{s:prob}
The main objective of the research is to make the state of the agents differentially private by adding privacy noise while achieving the desired system performance. The privacy noise can be added to (refer \cref{fig:Architecture}):
\begin{itemize}
    \item System output: Add differential privacy noise to the outputs measured by the sensors as $v^p_k$. 

    \item Control input: Add differential privacy noise directly to the control input as $w^p_k$. 
\end{itemize}

\begin{figure}[h!]
    \centering
    \includegraphics[width=.8\linewidth]{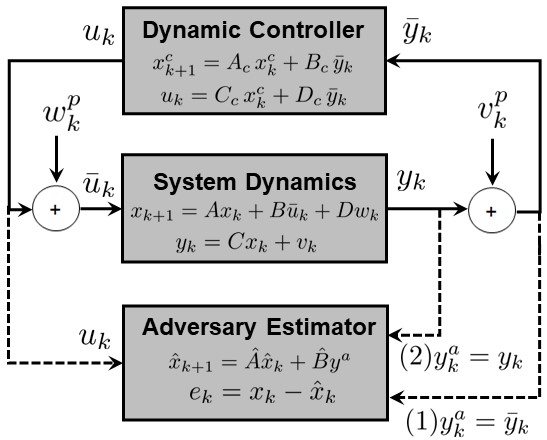}
    \caption{Design architecture for making agent's state differentially private by adding privacy noise to both system inputs and output.}
    \label{fig:Architecture}
\end{figure}

Notice that the control input noise is a physically feasible way to insert privacy noise into the system. Also, the actual privacy noise should be calculated by accounting for the actuator noise present in the system. We further consider two cases based on the capability of the adversary (refer \cref{fig:Architecture}):
\begin{itemize}
    \item In the first case, the adversary listens to the communication between the agents and the centralized controller ((1) $y^a_k = \bar{y}_k$). We add privacy noise to both the outputs measured by the sensor and the control inputs to make the system state differentially private. 

    \item In the second case, the adversary has his own sensors and has direct access to the output of the system ((2) $y^a_k = y_k$), and thus adding output privacy noise alone would not make the system differentially private.
    Although there is no benefit in adding privacy noise to the sensor side, we still add it and expect the design to remove the privacy noise on the output side.
\end{itemize}

% \textbf{Maybe We add multi-agent system}

Let us consider a discrete-time linear time-invariant (LTI) system, along with the addition of output privacy noise and control input privacy noise, described by the following state-space representation:
\begin{align}
x_{k+1} &= A x_k +B (u_k + w^p_k) + D w_k ,\label{state_eqn1} \\
y_k &=C x_k + v_k ,  \label{output_eqn1} \\
\bar{y}_k &=C x_k + v_k + v^p_k,  \label{output_eqn2} \\
z_k &= C_z x_k,  \label{meas_eqn2}
\end{align}
where $x_k \in \mathbb{R}^{n_x}$ is the state of the system at time $k$, $u_k \in \mathbb{R}^{n_u}$ is the control vector at time $k$. The initial state vector $x_0$ and the process noise at time $k$, $w_k$, are assumed to be independent random variables. In particular, $w_k \sim \mathcal{N}(\mathbf {0},W)$, $\forall k$, with $W\in \mathbb R^{n_w\times n_w}$ to be known and fixed covariance matrix. 
The output of the system $y_k \in \mathbb{R}^{n_y}$, is measured by a sensor network with sensor noise modeled as another independent Gaussian random variable $v_k \sim \mathcal{N}(\mathbf{0},V)$, $\forall k$ with $V\in \mathbb R^{n_v\times n_v}$ to be the known and fixed covariance matrix. The performance variable $z_k \in \mathbb{R}^{n_z}$ defines the variables of interest for the system.

The control input privacy noise, $w^p_k$, and output privacy noise, $v^p_k$, are modeled as random variables, $w^p_k \sim \mathcal{N}(\mathbf {0},W^p)$ and $v^p_k \sim \mathcal{N}(\mathbf{0},V^p)$, $\forall k$, with $W^p$ and $V^p$ being the covariance matrix, representing the strength of the added noise. We further define the inverse of the respective noise covariance matrices as:
$$\Gamma_w = W^{p^{-1}}, ~~ \Gamma_v = V^{p^{-1}}.$$

We assume that the adversary is smart and has full information about the system dynamics, i.e., knowledge of system matrices, $A,B,C,D$. Moreover, the adversary will design an optimal estimator to estimate the system state $\hat{x}_k$ using a general estimator of the form:
\begin{align}
    \hat{x}_{k+1} &=\hat{A} \hat{x}_{k}+ Bu_k+\hat{B} y^a_{k},  \label{e:est1}\\
    e_k &= x_k - \hat{x}_k, \label{e:est2}
\end{align}
such that the error in the state estimate $e_k$ is minimized.

\subsection{Relationship between differential privacy and error in state estimates due to control input and output privacy noise}
In this subsection, we show that the differential privacy of the system state can be represented by the error in the estimation of the state while using the optimal state estimator by the adversary of the form \cref{e:est1,e:est2}.  In particular, the covariance of the state error estimates $E_k = \mathbb{E}[e_k e_k^T]$ can be used as a metric to quantify differential privacy, and an increase in error covariance results in an increase in $\left(\epsilon, \delta \right)$-differentially privacy. 

\begin{lemma}
    (Gaussian mechanism for dynamical system; \cite{le2013differentially}): Let $\mathcal{G}$ denote an LTI dynamical system and $\|\mathcal{G}\|_\infty < \infty$ and let us use privacy parameters $\epsilon, \delta >0$. Then the Gaussian mechanism $\mathcal{M} u = \mathcal{G} u + w^p$, where $w^p$ is a Gaussian noise with $w^p \sim \mathcal{N}\left(0, \sigma^2 I_{n_y}\right)$, and $\sigma \geq  \beta ~\kappa\left(\delta, \epsilon\right) \|\mathcal{G}\|_\infty $,
    makes the system $\left(\epsilon, \delta\right)$-differentially private with respect to $\operatorname{Adj}_{\beta}$ in $u$ , i.e., $\|u-u'\|_2 \leq \beta$ with $\beta>0$. 
\end{lemma}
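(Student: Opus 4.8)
The plan is to derive this as a direct specialization of the Gaussian mechanism already established in Lemma~1. The mechanism $\mathcal{M}u = \mathcal{G}u + w^p$ is precisely the Gaussian mechanism applied to the input-to-output map $u \mapsto \mathcal{G}u$, with the sensitive object now being the input $u$ under the adjacency $\operatorname{Adj}_{\beta}$. Consequently, it suffices to identify the $2$-norm sensitivity of this linear map and substitute it into the noise threshold of Lemma~1. Concretely, I would prove that the operator sensitivity satisfies $\Delta_2\mathcal{G} = \|\mathcal{G}\|_\infty$, so that the condition $\sigma \ge \Delta_2\mathcal{G}\,\beta\,\kappa(\delta,\epsilon)$ of Lemma~1 collapses verbatim to the claimed bound $\sigma \ge \beta\,\kappa(\delta,\epsilon)\,\|\mathcal{G}\|_\infty$.

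First I would exploit linearity of the LTI operator. For any two adjacent inputs $u, u'$, meaning $\|u-u'\|_2 \le \beta$, linearity gives $\mathcal{G}u - \mathcal{G}u' = \mathcal{G}(u-u')$, so controlling the output deviation reduces to bounding $\|\mathcal{G}(u-u')\|_2$ in terms of $\|u-u'\|_2$; the relevant sensitivity is therefore exactly the $\ell_2$-to-$\ell_2$ induced gain of $\mathcal{G}$. The crux of the argument, and the step I expect to require the most care, is the system-theoretic identity that for a stable LTI system the induced $\ell_2$ gain equals the $\mathcal{H}_\infty$ norm of the transfer matrix,
\begin{align}
\nonumber \sup_{d \neq 0}\frac{\|\mathcal{G}d\|_2}{\|d\|_2} = \|\mathcal{G}\|_\infty .
\end{align}
I would prove this by passing to the frequency domain: Parseval's theorem equates the time-domain $\ell_2$ norms with the $L_2$ norms of the discrete-time Fourier transforms, and since $\widehat{\mathcal{G}d}(\omega)=G(e^{j\omega})\,\hat{d}(\omega)$, the supremal gain is the essential supremum over $\omega$ of the largest singular value $\bar{\sigma}\!\left(G(e^{j\omega})\right)$, which is by definition $\|\mathcal{G}\|_\infty$. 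The hypothesis $\|\mathcal{G}\|_\infty < \infty$ ensures this operator is bounded. The delicate part is verifying both directions: the upper bound $\|\mathcal{G}d\|_2 \le \|\mathcal{G}\|_\infty\,\|d\|_2$ follows immediately from the pointwise singular-value inequality, while its tightness requires constructing inputs spectrally concentrated near the peak frequency, though only the upper bound is strictly needed for sufficiency.

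Combining these steps yields $\Delta_2\mathcal{G} = \|\mathcal{G}\|_\infty$. Feeding this sensitivity into the Gaussian mechanism of Lemma~1 shows that injecting $w^p \sim \mathcal{N}\!\left(0,\sigma^2 I_{n_y}\right)$ with $\sigma \ge \beta\,\kappa(\delta,\epsilon)\,\|\mathcal{G}\|_\infty$ renders $\mathcal{M}$ $(\epsilon,\delta)$-differentially private with respect to $\operatorname{Adj}_{\beta}$ in $u$, which is exactly the asserted statement. The remaining bookkeeping, namely that the per-unit adjacency sensitivity is scaled by the actual adjacency radius $\beta$ to recover the factor appearing in Lemma~1, is routine once the $\mathcal{H}_\infty$ identity is in hand.
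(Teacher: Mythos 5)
Your argument is sound, but note that the paper supplies no proof of this lemma at all: it is imported verbatim from the cited reference \cite{le2013differentially}, so there is nothing in the text to compare against. What you have written is essentially the standard derivation used in that reference -- reduce to the generic Gaussian mechanism (Lemma~1), use linearity to turn the sensitivity computation into an induced-gain computation, and invoke Parseval to identify the $\ell_2$-induced gain of a stable LTI operator with $\|\mathcal{G}\|_\infty$ -- and each of those steps is correct; only the upper bound $\|\mathcal{G}d\|_2 \le \|\mathcal{G}\|_\infty \|d\|_2$ is needed, as you observe, so the tightness construction is dispensable. The one point to be careful about is the $\beta$ bookkeeping you relegate to "routine": under the paper's Definition~2 the sensitivity $\Delta_2\mathcal{G}$ is a supremum over \emph{adjacent} pairs and therefore already contains a factor of $\beta$ for a linear map, i.e.\ $\Delta_2\mathcal{G} = \beta\,\|\mathcal{G}\|_\infty$, so your identity $\Delta_2\mathcal{G} = \|\mathcal{G}\|_\infty$ holds only for the per-unit-adjacency (operator-norm) reading of sensitivity. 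Substituting the literal Definition~2 into the literal Lemma~1 condition $\sigma \ge \Delta_2\mathcal{G}\,\beta\,\kappa(\delta,\epsilon)$ would produce $\beta^2$; this is an inconsistency internal to the paper's statements rather than an error of yours, but your write-up should state explicitly which normalization of $\Delta_2\mathcal{G}$ you adopt so that exactly one factor of $\beta$ survives and the claimed threshold $\sigma \ge \beta\,\kappa(\delta,\epsilon)\,\|\mathcal{G}\|_\infty$ is recovered.
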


\begin{remark}
    The above lemma allows us to make the control input differential private by directly adding the noise to the control inputs when the output is queried and obtained by passing through an LTI dynamical system. In this paper, we want to make the system state differentially private by adding privacy noise to both system inputs and outputs.
\end{remark}

Yazdani et.~al.~\cite{yazdani2022differentially} used the level of privacy to calculate the impact on estimation error and showed the relationship between the privacy noise and the trace of covariance of the state error estimates $\operatorname{tr}(E_k)$ where the state estimates are calculated using a Kalman filter. However, another way to look at the impact of differential privacy from the point of view of an adversary is to hinder his capability to estimate the state trajectories accurately. So if an adversary designs an optimal estimator, the error in estimating state trajectories can be used as a metric of differential privacy. Next, we expand on the results generated in \cite{yazdani2022differentially} to quantify standard ($\epsilon,\delta$)-differential privacy as the error in adversary's state estimates due to both control input and output privacy noise.

\begin{lemma}
For the given dynamical systems in (Eq.~(\ref{state_eqn1})-(\ref{output_eqn2})) with both control input privacy noise $w^p_k \sim \mathcal{N}(\mathbf {0},W^p)$ and output privacy noise $v^p_k \sim \mathcal{N}(\mathbf{0},\sigma^2 I_{n_y}-V)$, with $\sigma = \bar{S}(C) \beta k(\delta,\epsilon)$, and for a given adjacency $\|x-x'\|_2 \leq \beta$ with $\beta>0$, if the states are ($\epsilon,\delta$)-differential private with $\delta \in [10^{-5},10^{-1}]$ and
\begin{align*}
 \epsilon &\leq \left( \frac{\bar{S}(C)^2 \beta^2 (n_x-\operatorname{tr} \left(\underbar{E}) \underline{\lambda}(\Psi)^{-1}\right)}{ \operatorname{tr}(\underbar{E}) C_u^2} \right)^{1/2}, \\
 \Psi &= DWD^T+BW^p B^T, \\
 V^p &= \sigma^2 I_{n_y}-V
\end{align*}
where $\bar{S}(\cdot)$ represents the maximum singular value, $\underline{\lambda}(\cdot)$ represents the smallest eigenvalue of the matrix, and $C_u$ is the value of $C$ corresponding to the index for which the diagonal element of $C^T(V+V^p)^{-1}C)$ is maximum, then the state error estimate is lower bounded by $\operatorname{tr}(\underbar{E})$ with $\mathbb{E}[e_k e_k^T]>\underbar{{E}}$.  
\end{lemma}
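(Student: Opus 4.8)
The plan is to identify the adversary's optimal estimator with the Kalman filter for the augmented noise model, write its error covariance in information form, and then lower-bound the trace of that covariance by chaining a few matrix inequalities together with the privacy relation supplied by the Gaussian mechanism.

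First I would fix the estimator structure. Substituting \cref{state_eqn1} and the measurement seen by the adversary into \cref{e:est1,e:est2} and choosing $\hat{A}=A-\hat{B}C$ yields the error recursion $e_{k+1}=(A-\hat{B}C)e_k + Dw_k + Bw^p_k - \hat{B}(v_k+v^p_k)$, whose driving covariances are $\Psi=DWD^T+BW^pB^T$ on the process side and $V+V^p=\sigma^2 I_{n_y}$ on the measurement side. The covariance $E_k=\mathbb{E}[e_ke_k^T]$ achievable by an optimal adversary is minimized by the Kalman gain, so it satisfies the information-form update $E_k^{-1}=E_{k|k-1}^{-1}+C^T(V+V^p)^{-1}C$, where $E_{k|k-1}=AE_{k-1}A^T+\Psi$ is the one-step prediction covariance.

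The core of the argument is an upper bound on $\operatorname{tr}(E_k^{-1})$. Since $E_{k|k-1}=AE_{k-1}A^T+\Psi\succeq\Psi$, we have $E_{k|k-1}^{-1}\preceq\Psi^{-1}$, hence $E_k^{-1}\preceq\Psi^{-1}+C^T(V+V^p)^{-1}C$. Taking traces and using monotonicity gives $\operatorname{tr}(E_k^{-1})\leq\operatorname{tr}(\Psi^{-1})+\operatorname{tr}(C^T(V+V^p)^{-1}C)$. I would bound the first term by $n_x/\underline{\lambda}(\Psi)$ and, because $(V+V^p)^{-1}=\sigma^{-2}I_{n_y}$, bound the second by $n_x$ times its largest diagonal entry, namely $n_x C_u^2/\sigma^2$, which is exactly how $C_u$ is defined. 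A trace form of the Cauchy--Schwarz (AM--HM) inequality, $\operatorname{tr}(E_k)\operatorname{tr}(E_k^{-1})\geq n_x^2$, then converts this into $\operatorname{tr}(E_k)\geq n_x\big(\underline{\lambda}(\Psi)^{-1}+C_u^2\sigma^{-2}\big)^{-1}$. It remains to eliminate $\sigma$ in favor of $\epsilon$: substituting the Gaussian-mechanism scaling $\sigma=\bar{S}(C)\beta\,\kappa(\delta,\epsilon)$ and using the elementary fact that $\kappa(\delta,\epsilon)\geq 1/\epsilon$ whenever $K_\delta=\mathcal{Q}^{-1}(\delta)\geq 1$ — which holds precisely on the stated range $\delta\in[10^{-5},10^{-1}]$ since $\mathcal{Q}(1)\approx 0.159>10^{-1}$ — gives $\sigma^2\geq\bar{S}(C)^2\beta^2/\epsilon^2$ and hence $\operatorname{tr}(E_k)\geq n_x\bar{S}(C)^2\beta^2\big(\bar{S}(C)^2\beta^2\underline{\lambda}(\Psi)^{-1}+C_u^2\epsilon^2\big)^{-1}$. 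Finally I would observe that the hypothesis on $\epsilon$ is just this last expression rearranged to be at least $\operatorname{tr}(\underbar{E})$, which closes the argument with $E_k\succ\underbar{E}$ in the trace sense.

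I expect the main obstacle to be two points rather than the algebra. First, justifying that the adversary's optimal estimator is the Kalman filter and writing its recursion cleanly in information form, including the predictor-versus-filter bookkeeping implied by \cref{e:est1}. Second, the slight gap between the scalar/trace bound that the inequalities actually deliver and the matrix inequality $\mathbb{E}[e_ke_k^T]\succ\underbar{E}$ asserted in the statement; I would either interpret $\underbar{E}$ in the trace sense or take it proportional to the identity so that the two coincide. The inequality $\kappa(\delta,\epsilon)\geq 1/\epsilon$ is the one place where the hypothesis $\delta\in[10^{-5},10^{-1}]$ is genuinely used, so I would verify it carefully, as it is what pins down the admissible range of $\delta$.
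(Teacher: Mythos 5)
Your proposal is correct and follows essentially the same route as the paper: the paper's own proof merely writes down the same \emph{a priori}/\emph{a posteriori} error-covariance recursions under the augmented noise $\Psi = DWD^T + BW^pB^T$ and $V+V^p = \sigma^2 I_{n_y}$ and then defers entirely to Theorem~2 of Yazdani et al., which is exactly the chain you reconstruct (information-form Kalman update, $\Sigma \succeq \Psi$, the trace and AM--HM bounds with $C_u$ and $\underline{\lambda}(\Psi)$, and $\kappa(\delta,\epsilon)\geq 1/\epsilon$ on the stated $\delta$ range). The trace-versus-matrix-order ambiguity you flag in the conclusion $\mathbb{E}[e_k e_k^T]>\underline{E}$ is present in the paper's statement as well and is not resolved by its proof, so your reading of the bound in the trace sense is the right one.
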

\begin{proof}
Here we consider both control input privacy noise and output privacy noise along with the already present process and measurement noise. Thus the equation for \textit{a priori} state error covariance follows: $$\Sigma = A(\Sigma^{-1}+C^T(V+V^p)^{-1}C)^{-1}A^T + DWD^T+BW^pW^T,$$ and for \textit{a posteriori} state error covariance follows: $$\overline{\Sigma} = (\Sigma^{-1}+C^T(V+V^p)^{-1}C)^{-1}.$$ 
After that, it follows directly from (Theorem 2 of \cite{yazdani2022differentially}) where we consider a single agent with state dimension $n_x$ instead of the multi-agent case.
    %  different privacy noise levels for each measurement channel for a single agent.
\end{proof}

The above result is used to show that enforcing differential privacy to the systems' state ensures a lower bound on state estimation error. Please note that the above result provides a necessary condition for differential privacy based on the lower bound on state estimation error. More work is needed to find the bounds for sufficiency.

% \subsection{Relationship between differential privacy and error in state estimates due to control input privacy noise}

\begin{remark}
    Please note that both the performance norm $\mathbb{E}[z_k z_k^T]$ and the estimator error covariance $\mathbb{E}[e_k e_k^T]$ increases with an increase in output and control input privacy noises, and our objective is to find the optimum noise level along with the controller and estimator to minimize the performance norm for a given error covariance.
\end{remark}

\textbf{Main Design Problem Formulation:} Design the strength of privacy noises, $W^p (\Gamma_w)$ and $V^p(\Gamma_v)$, and an optimal state estimator of the form \cref{e:est1,e:est2}, and a general linear dynamic controller of the form:
\begin{align}
x^{c}_{k+1} &= A_c\, x^{c}_{k} +B_c\, \bar{y}_k, \label{e:Cntrl_dyn1}\\
u_k &= C_c\, x^{c}_k + D_c\, \bar{y}_k. \label{e:Cntrl_dyn2}
\end{align}
such that the state error covariance $\mathbb{E}[e_k e_k^T]$ is maximized while closed-loop system performance is bounded $\mathbb{E}[z_k z_k^T]\leq\bar{\mathbf{Z}}$. 
\begin{align}
\begin{array}{lll}
        \max_{\{A_c, B_c,C_c, D_c, \Gamma_w, \Gamma_v, \hat{A}, \hat{B}\}} & \operatorname{tr}(\mathbb{E}[e_k e_k^T]) \\
        \textit{s.t.} & \mathbb{E}[z_k z_k^T]\leq\bar{\mathbf{Z}} .
\end{array}
\end{align}

Another problem of interest can be to minimize the closed-loop system performance $\mathbb{E}[z_k z_k^T]$ while lower bounding the state error covariance $\mathbb{E}[e_k e_k^T]>\underbar{{E}}$ for some given $\underbar{{E}}$, i.e., to have higher differential privacy than some specified limit.
\begin{align}
\begin{array}{lll}
        \min_{\{A_c, B_c,C_c, D_c, \Gamma_w, \Gamma_v, \hat{A}, \hat{B}\}} & \operatorname{tr}(\mathbb{E}[z_k z_k^T]) \\
        \textit{s.t.} & \mathbb{E}[e_k e_k^T]\geq\mathbf{\underbar{E}}.
\end{array}
\end{align}

For the two cases that we discussed based on the capabilities of the adversary, the information available to the estimator would change from (1) $y^a_k = \bar{y}_k$ to (2) $y^a_k = y_k$.

\begin{remark}
    Notice that the estimator design from the point of view of the adversary is general and can be used to simultaneously design the estimator with the privacy noise for the case of open loop system dynamics also, i.e. with $u_k = 0$. 
\end{remark}

\section{Final Design Solution Development}\label{s:design}
In this section, we develop frameworks for the co-design of input and output privacy noise with a dynamic feedback controller; and the co-design of input and output privacy noise with an optimal estimator. We further provide the final design algorithm for the two cases of adversarial capabilities. In both cases, we formulate the problem such that the controller gets the output signal with added privacy noise $v^p$ and let the optimization solve the optimal privacy noise.

\subsection{Adversary with access to communication channels}
For the case where the adversary listens to the noisy output passed through the communication channel $y^a_k = \bar{y}_k = C x_k + v_k + v^p_k$, the final design problem can be solved using the following result. 

\begin{theorem}
    For the dynamical system given in \cref{state_eqn1,output_eqn2} with adversary listening through the communication channel, and to maximize differential privacy for a fixed performance bound, the optimal design solution with privacy noises, $W^p(\Gamma_w)$ and $V^p(\Gamma_v)$, an optimal state estimator of the form \cref{e:est1,e:est2}, and a general linear dynamic controller of the form \cref{e:Cntrl_dyn1,e:Cntrl_dyn2}, can be solved as a convex optimization problem using the following LMIs:
% \begin{align*}
%     & \textit{minimize}_{\{A_c, B_c,C_c, \Gamma_w, \Gamma_v, \hat{A}, \hat{B}\}}  ~ \operatorname{trace}\left(\Gamma_w+\Gamma_v \right)  \\
%         & \textit{s.t.} ~~~~~~ \mathbb{E}[z_k z_k^T]<\bar{\mathbf{Z}} \rightarrow  (\cref{e:Closed_perf1}, \cref{e:Closed_perf2}), \\
%         & \textit{s.t.} ~~~~~~
%         \mathbb{E}[e_k e_k^T]<\bar{\mathbf{E}} \rightarrow  (\cref{e:Est_per1}, \cref{e:Est_per2}).
% \end{align*}
\begin{align*}
  \textit{minimize}_{\{A_c, B_c,C_c, \Gamma_w, \Gamma_v, \hat{A}, \hat{B}\}}  ~ \operatorname{trace}\left(\Gamma_w+\Gamma_v \right)  ,
\end{align*}
\begin{equation}
\begin{bmatrix}
X & I & A X+B L & A  & D &  B & O & O  \\
(\cdot)^{T} & Y & Q & Y A+F C  & Y D & Y B & F & F  \\
(\cdot)^{T} & (\cdot)^{T} & X & I  & O & O & O & O \\
(\cdot)^{T} & (\cdot)^{T} & (\cdot)^{T} & Y  & O & O & O & O \\
(\cdot)^{T}  & (\cdot)^{T} & (\cdot)^{T} & (\cdot)^{T} & W^{-1} & O & O & O\\
(\cdot)^{T} &  (\cdot)^{T} & (\cdot)^{T} & (\cdot)^{T} & (\cdot)^{T} & \Gamma_w & O & O \\
(\cdot)^{T} &  (\cdot)^{T} & (\cdot)^{T} & (\cdot)^{T} & (\cdot)^{T} & (\cdot)^{T} & V^{-1}  & O\\
(\cdot)^{T} &  (\cdot)^{T} & (\cdot)^{T} & (\cdot)^{T} & (\cdot)^{T} & (\cdot)^{T} & (\cdot)^{T} & \Gamma_v 
\end{bmatrix}>O, \label{e:Closed_perf1}
\end{equation}
\begin{equation}
\begin{bmatrix}
\bar{\mathbf Z}  & C_z X &  C_z  \\
(\cdot)^{T} & X & I \\
(\cdot)^{T} & (\cdot)^{T} & Y  
\end{bmatrix}>O, \label{e:Closed_perf2}
\end{equation}
\begin{equation}
    \begin{bmatrix}
\hat{X} & I & A \hat{X}  & A  & D &  B & O & O  \\
(\cdot)^{T} & \hat{Y} & \hat{Q} & \hat{Y} A + \hat{F} C  & \hat{Y} D & \hat{Y} B & \hat{F} & \hat{F}  \\
(\cdot)^{T} & (\cdot)^{T} & \hat{X} & I  & O & O & O & O \\
(\cdot)^{T} & (\cdot)^{T} & (\cdot)^{T} & \hat{Y}  & O & O & O & O \\
(\cdot)^{T}  & (\cdot)^{T} & (\cdot)^{T} & (\cdot)^{T} & W^{-1} & O & O & O\\
(\cdot)^{T} &  (\cdot)^{T} & (\cdot)^{T} & (\cdot)^{T} & (\cdot)^{T} & \Gamma_w & O & O \\
(\cdot)^{T} &  (\cdot)^{T} & (\cdot)^{T} & (\cdot)^{T} & (\cdot)^{T} & (\cdot)^{T} & V^{-1}  & O\\
(\cdot)^{T} &  (\cdot)^{T} & (\cdot)^{T} & (\cdot)^{T} & (\cdot)^{T} & (\cdot)^{T} & (\cdot)^{T} & \Gamma_v 
\end{bmatrix}>O, \label{e:stable_est1}
\end{equation}
\begin{equation}
\begin{bmatrix}
\bar{\mathbf E}  & \hat{X}-\hat{U} &  I \\
(\cdot)^{T} & \hat{X} & I \\
(\cdot)^{T} & (\cdot)^{T} & \hat{Y} 
\end{bmatrix}>O,\label{e:stable_est2}
\end{equation}
\noindent and finally, the optimal estimator and dynamic controller can be calculated as:
\begin{align}
    \hat{A} &= \hat{S}^{-1}(\hat{Q}-\hat{Y}A\hat{X}-\hat{F} C \hat{X})\hat{U}^{-1}, \label{e:stable_est_gain1}\\
    \hat{B} &= \hat{S}^{-1}\hat{F}, \label{e:stable_est_gain2}
\end{align}
\begin{equation}
\begin{split}
\begin{bmatrix}
A_{c} & B_{c} \\
C_{c} & D_{c}
 \end{bmatrix} = & \begin{bmatrix}
S^{-1} & -S^{-1} Y B \\
O & I
 \end{bmatrix}  \begin{bmatrix}
Q-Y A X & F \\
L & O
 \end{bmatrix}  \\ & \cdot \begin{bmatrix}
U^{-1} & O \\
-C X U^{-1} & I
 \end{bmatrix}. 
\end{split} \label{e:RLFQ}
\end{equation}
\end{theorem}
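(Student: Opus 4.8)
The plan is to recast the joint design as a covariance-based dynamic output-feedback synthesis problem and then linearize the resulting bilinear matrix inequalities through the standard congruence-transformation and change-of-variables technique. First I would form the closed loop by augmenting the plant state $x_k$ with the controller state $x^c_k$ and substituting \eqref{e:Cntrl_dyn1}--\eqref{e:Cntrl_dyn2} into \eqref{state_eqn1} and \eqref{output_eqn2}. With $\xi_k=[x_k^T~(x^c_k)^T]^T$ and $\eta_k=[w_k^T~(w^p_k)^T~v_k^T~(v^p_k)^T]^T$ collecting the four independent noises, the closed loop reads $\xi_{k+1}=\mathcal{A}_{cl}\xi_k+\mathcal{B}_{cl}\eta_k$, $z_k=\mathcal{C}_{cl}\xi_k$, where $\mathcal{A}_{cl},\mathcal{B}_{cl}$ depend bilinearly on $(A_c,B_c,C_c,D_c)$. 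The stationary covariance $P\succ 0$ then satisfies the discrete Lyapunov relation $P\succeq\mathcal{A}_{cl}P\mathcal{A}_{cl}^T+\mathcal{B}_{cl}\,\mathcal{W}\,\mathcal{B}_{cl}^T$ with $\mathcal{W}=\operatorname{blkdiag}(W,W^p,V,V^p)$, and the performance requirement $\mathbb{E}[z_kz_k^T]\preceq\bar{\mathbf Z}$ becomes $\bar{\mathbf Z}\succeq\mathcal{C}_{cl}P\mathcal{C}_{cl}^T$.

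Second, I would Schur-complement both relations. Eliminating the noise gain $\mathcal{B}_{cl}\mathcal{W}\mathcal{B}_{cl}^T$ against $\mathcal{W}^{-1}=\operatorname{blkdiag}(W^{-1},\Gamma_w,V^{-1},\Gamma_v)$ is precisely what produces the four lower-right diagonal blocks $W^{-1},\Gamma_w,V^{-1},\Gamma_v$ in \eqref{e:Closed_perf1} --- and, crucially, it makes the privacy variables enter affinely as $\Gamma_w,\Gamma_v$ rather than through the inverses $W^p,V^p$ --- while Schur-complementing the $z$-term gives \eqref{e:Closed_perf2}. Both inequalities are now affine in $P$ and in $(\Gamma_w,\Gamma_v)$ but remain bilinear in $(P,A_c,B_c,C_c,D_c)$. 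The pattern of the noise columns ($D,B$ into the plant block; $F$ into the controller-measurement block) reflects that $v_k,v^p_k$ reach the loop only through $\bar{y}_k$, hence through $B_c$.

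Third --- the crux --- I would linearize by the change of variables. Partition $P$ and $P^{-1}$ into plant- and controller-sized blocks, taking $Y$ as the leading block of $P$, $X$ as the leading block of $P^{-1}$, and coupling factors $N,M$ with $MN^T=I-XY$; set $S:=N$ and $U:=M$. Congruence-transforming by $\operatorname{diag}(\Pi,I,\ldots,I)$ with $\Pi$ assembled from $X$ and $M$, and introducing $Q,L,F$ that absorb the bilinear products (respectively $YAX$ plus the controller cross-terms, $C_cM^T+D_cCX$, and $NB_c+YBD_c$), collapses every bilinear term into one linear unknown. This produces exactly \eqref{e:Closed_perf1} with leading block $\begin{bmatrix}X & I\\ I & Y\end{bmatrix}\succ 0$ together with \eqref{e:Closed_perf2}. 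The map $(A_c,B_c,C_c,D_c)\mapsto(Q,L,F)$ is bijective whenever $S,U$ are nonsingular, and inverting it gives the recovery \eqref{e:RLFQ}. The adversary's estimator is handled identically: subtracting \eqref{e:est1} from \eqref{state_eqn1} with $y^a_k=\bar{y}_k$ yields error dynamics driven by $\eta_k$ whose covariance $E$ obeys a Lyapunov recursion, and the same partition in the hatted unknowns $\hat X,\hat Y,\hat Q,\hat F,\hat U,\hat S$ linearizes \eqref{e:stable_est1} and recovers $\hat A,\hat B$ through \eqref{e:stable_est_gain1}--\eqref{e:stable_est_gain2}.

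The hard part is twofold. First, one must verify that a single congruence and a single substitution $(Q,L,F)$ simultaneously linearize every block of both the Lyapunov and the output-covariance inequalities --- the off-diagonal closed-loop entries $AX+BL$, $YA+FC$ and $Q$ must align exactly after transformation, which is where the bookkeeping is most delicate. Second, and more fundamentally, the privacy objective asks for a \emph{lower} bound on the adversary's optimal estimation-error covariance, whereas the usual synthesis machinery certifies \emph{upper} bounds; the Lyapunov inequality for the estimator must therefore be imposed with the reversed sign so that the certificate matrix under-estimates $E$, and one must argue that the Schur structure of \eqref{e:stable_est2} (note the $\hat X-\hat U$ block) genuinely enforces $E\succeq\bar{\mathbf E}$ rather than bounding a surrogate. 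Establishing this reversed-direction certificate within a convex (LMI) formulation is the nonstandard step that needs the most care.
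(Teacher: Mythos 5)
Your closed-loop construction and linearization plan for the controller half of the theorem match the paper's proof: the paper augments the plant with the controller state (taking $D_c=O$), stacks the four noises into $\mathbf w\sim\mathcal N(\mathbf 0,\operatorname{blkdiag}(W,W^p,V,V^p))$, writes the covariance Lyapunov inequality $\mathbf A\mathbf X\mathbf A^T+\mathbf B\mathbf W\mathbf B^T<\mathbf X$ together with the output bound $\mathbf C\mathbf X\mathbf C^T<\bar{\mathbf Z}$, Schur-complements against $\operatorname{blkdiag}(W^{-1},\Gamma_w,V^{-1},\Gamma_v)$ so that the privacy variables enter affinely as the inverses $\Gamma_w,\Gamma_v$, and then applies the Scherer-type congruence with $\mathbf T=\left[\begin{smallmatrix} I & Y\\ O & S^T\end{smallmatrix}\right]$ and the change of variables $(Q,L,F)$, recovering the gains from $YX+SU=I$ exactly as in \eqref{e:RLFQ}. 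Your treatment of the adversary's estimator as a second augmented system leading to \eqref{e:stable_est1} is likewise the paper's route.

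Where your proposal genuinely diverges --- and where it has a gap --- is the privacy constraint. You propose to impose the estimator Lyapunov inequality ``with the reversed sign'' so that \eqref{e:stable_est2} would certify $E\succeq\bar{\mathbf E}$. That is not what the paper does, and as you yourself suspect it would not close: a reversed discrete Lyapunov inequality does not give a convex under-estimate of the stationary covariance, and maximizing a lower bound on the error of an adversary who simultaneously optimizes his estimator is a max--min problem, not an LMI. The paper keeps every inequality in the standard upper-bound direction: \eqref{e:stable_est2} is an ordinary Schur complement enforcing $\hat{\mathbf C}\hat{\mathbf X}\hat{\mathbf C}^T<\bar{\mathbf E}$ (the block $\hat X-\hat U$ is just $\hat{\mathbf C}\hat{\mathbf X}$ after the congruence), i.e.\ it asserts that the smart adversary's optimal estimator can drive the error covariance \emph{below} $\bar{\mathbf E}$. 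The privacy objective is instead moved into the cost: since $\Gamma_w=(W^p)^{-1}$ and $\Gamma_v=(V^p)^{-1}$, minimizing $\operatorname{tr}(\Gamma_w+\Gamma_v)$ is a surrogate for maximizing the injected noise power, subject to the performance bound and to the estimation error remaining achievable at level $\bar{\mathbf E}$. The paper states this reformulation explicitly (``we update the design problem to maximize the privacy noises while bounding the error covariance''). Recognizing that substitution --- trading the nonconvex ``maximize $\operatorname{tr}(E)$'' objective for ``maximize noise subject to an upper-bounded optimal-estimator error'' --- is the missing step; without it your plan stalls at a lower-bound certificate that the LMI machinery cannot deliver.
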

\begin{proof}
    We first design the dynamic controller of the form \cref{e:Cntrl_dyn1,e:Cntrl_dyn2}. We assume the direct feedforward term in the dynamic controller to be zero $D_c = O$ for the bounded control input covariance. Using the above compensator, the closed-loop system dynamics can be written using the augmented state vector $\mathbf{x}^T := \begin{bmatrix} x^T & {x^c}^T  \end{bmatrix}$ with augmented process noise $\mathbf{w}^T := \begin{bmatrix}  w^T & w^{p^T} & v^T & v^{p^T}\end{bmatrix}$ as:
\begin{align}
    \mathbf{x}_{k+1} &= \mathbf A \, \mathbf x_k +\mathbf {B}\, \mathbf{w}_{k}, \label{eq:closedA}\\
{z}_{k} &= \mathbf C\, \mathbf{x}_{k},\label{eq:closedC}
\end{align}
\begin{align*}
\text{where}~~   \mathbf A = \begin{bmatrix} A  & BC_c \\ B_c C & A_c \end{bmatrix}, ~ 
    \mathbf B = \begin{bmatrix}  D & B & O & O\\  O & O & B_c & B_c \end{bmatrix}, 
\end{align*}
$\mathbf C = \begin{bmatrix} C_z & O \end{bmatrix}$ and $\mathbf w_{k}\sim \mathcal N(\mathbf 0, \mathbf W )$, where 
% $\mathbf W:=\text{Diag}\{W,V\}$.
\begin{align}
    % \mathbf W = \begin{bmatrix}  W & O & O & O \\ O  & W^p & O & O\\ O & O & V & O\\  O & O& O& V^p  \end{bmatrix}, \\
    \mathbf W^{-1} = \begin{bmatrix}  W^{-1} & O & O & O \\ O  & \Gamma_w & O & O\\ O & O & V^{-1} & O\\  O & O& O& \Gamma_v \end{bmatrix}.
\end{align}

It is a standard result that the above closed loop system is stable and a steady-state state covariance matrix ($\mathbf X > 0 $) exists, if:
\begin{equation} 
\mathbf A\mathbf X\mathbf A^T+\mathbf B\mathbf W\mathbf B^T < \mathbf X, 
\end{equation}
which using Schur's complement gives: 
\begin{equation}
\begin{bmatrix} 
\mathbf X & \mathbf A\mathbf X & \mathbf B \\
(\cdot)^{T}& \mathbf X & O \\
(\cdot)^{T} & (\cdot)^{T} & \mathbf W^{-1} \end{bmatrix} > O, \label{stable_ineq11} 
\end{equation}
where $(\cdot)^{T}$ represents the corresponding transpose of the symmetric block. It is straightforward to show that the performance covariance can be bounded as:
$\mathbf C\mathbf X \mathbf C^T  < \bar{\mathbf Z},$ which can be written as:
\begin{equation}
    \begin{bmatrix}
   \bar{\mathbf Z} & \mathbf C \mathbf X  \\
(\cdot)^{T} & \mathbf X \end{bmatrix} > O .\label{e:out_bound}
\end{equation}

\noindent Notice that the constraint in Eqn.~(\ref{stable_ineq11}) is not an LMI. We need to perform congruence transformation and change of variables to convert them to LMIs \cite{Scherer_LMI_1997,Integrating2021MSSP}. Let us define and partition the matrix as:
\begin{equation*}
    \mathbf X \triangleq  \begin{bmatrix}
X & U^{T} \\
U & \hat{X}
\end{bmatrix} , \quad \mathbf X^{-1} \triangleq \begin{bmatrix}
Y & S \\
S^{T} & \hat{Y}\end{bmatrix},
\end{equation*}
and the transformation matrix \
% $\mathbf{T}$ 
\begin{equation*}
    \mathbf{T} \triangleq \begin{bmatrix}
    I & Y \\ O & S^T 
    \end{bmatrix}
\end{equation*}
and associated congruence transformation matrices
% $\{\mathbb T, \tilde{\mathbb T}\}$
\begin{equation*}
    \mathbb T \triangleq \begin{bmatrix}
    \mathbf{T}  & O & O \\
O & \mathbf{T} & O \\
O & O & I
    \end{bmatrix}, \tilde{\mathbb T} \triangleq \begin{bmatrix}
    I & O  \\
O & \mathbf{T} \end{bmatrix}.
\end{equation*}
\noindent Applying $\{\mathbb T, \tilde{\mathbb T}\}$ to Eqn.~(\ref{stable_ineq11}) and Eqn.~(\ref{e:out_bound}), we obtain:
\begin{equation}\label{eq: transformation1}
  \mathbb T^T  \begin{bmatrix}
    \mathbf X & \mathbf A \mathbf X & \mathbf B \\
(\cdot)^{T} & \mathbf X& O \\
(\cdot)^{T} & (\cdot)^{T} & \mathbf W^{-1}
    \end{bmatrix} \mathbb T > O,
\end{equation}

\begin{equation} \label{eq: transformation2}
    \tilde{\mathbb T}^T \begin{bmatrix}
   \bar{\mathbf Z} & \mathbf C \mathbf X  \\
(\cdot)^{T} & \mathbf X 
    \end{bmatrix} \tilde{\mathbb T} > O .
\end{equation} 

% Using the following transformations:
% \begin{align}
% \mathbf{T}^{T} \mathbf{A}\mathbf{X} \mathbf{T} &=\begin{bmatrix}
% A X+B L & A \\
% Q & Y A+F C
%  \end{bmatrix},  \label{e:LMI_close1}\\
% \mathbf{T}^{T} \mathbf{B}&=\begin{bmatrix}
% D & B & O & O \\
% Y D & Y B & F & F 
%  \end{bmatrix}, \label{e:LMI_close2}\\
%  \mathbf{C} \mathbf{X}\mathbf{T} &= \begin{bmatrix}
%      C_z X & C_z
%  \end{bmatrix},\\
% \mathbf{T}^{T} \mathbf{X} \mathbf{T} &=\begin{bmatrix}
% X & I \\
% I & Y
%  \end{bmatrix}, \label{e:LMI_close3}
% \end{align}
% we obtain:

Expansion of \cref{eq: transformation1,eq: transformation2} under an appropriate change of variables leads to a set of LMIs \cref{e:Closed_perf1,e:Closed_perf2} that do not depend on $S$ or $U$. Once the $X, Y$ are obtained, matrices $S$ and $U$ need to be constructed using:
% where the following change-of-variables were used that can be written in compact form as:
% Notice that the above LMIs 
\begin{equation}
Y X+S U=I, \label{eq:YXSU}
\end{equation}
and, a handy choice of $U$ and $S$ satisfying \cref{eq:YXSU} is $S=Y$, then $U=Y^{-1}-X$.
\noindent Notice that when the controller has the same order as the plant, $S$ and $U$ are square and non-singular matrices, in which case the controller gain matrices can be calculated using \cref{e:RLFQ}.

% Also, notice that the change of variables operator is affine in $L, F$, and $Q$ and where the following change-of-variables were used:
% \begin{align}
% L&=C_{c} U, \\
% F&=S B_{c}, \\
% Q&=S A_{c} U+Y A X+S B_{c} C X+Y B C_{c} U,
% \end{align}

Although the original problem was to maximize the $\mathbb{E}[e_k e_k^T]$ to increase the differential privacy, but an increase in error covariance can also result from suboptimal estimator gains. However, as we consider a smart adversary, who would always design an optimal estimator, we update the design problem to maximize the privacy noises while bounding the error covariance. 

Let us design the estimator to bound the error covariance for different states with the estimator dynamics given as:
\begin{align}
    \hat{x}_{k+1} =\hat{A} \hat{x}_{k} + B u_k +\hat{B} y^a_{k}.
\end{align}

Combining the above estimator and the underlying dynamics given in \cref{state_eqn1,output_eqn1}, the combined dynamics can be written using the augmented state vector $\mathbf{\hat{x}}^T := \begin{bmatrix} x^T & {\hat{x}}^T  \end{bmatrix}$ as:
\begin{align}
    \mathbf{\hat{x}}_{k+1} &= \mathbf{\hat{A}} \, \mathbf{\hat{x}_k} +\mathbf{\hat{B}}\, \mathbf{w}_{k}, \label{eq:closedEstA}
\end{align}

\begin{align}
\text{where}~    \mathbf{\hat{A}} = \begin{bmatrix} A  & O \\ \hat{B} C & \hat{A} \end{bmatrix},  ~~
    \mathbf{\hat{B}} = \begin{bmatrix}   D & B & O & O\\  O & O &  \hat{B}  &  \hat{B} \end{bmatrix},
\end{align}
and the error in estimation can be written as:
\begin{align}
\mathbf{e}_{k} =  \mathbf{\hat{C}} \mathbf{\hat{x}}_{k}, ~~~ \mathbf{\hat{C}} = \begin{bmatrix} I&-I\end{bmatrix}, \label{eq:closedEstC}
\end{align}
% and $\mathbf w_{k}\sim \mathcal N(\mathbf 0, \mathbf W )$, where 
% % $\mathbf W:=\text{Diag}\{W,V\}$.
% \begin{align}
%     \mathbf W = \begin{bmatrix}  W & O \\ O  & V \end{bmatrix}, ~~ \mathbf W^{-1} = \begin{bmatrix} W^{-1} & O \\ O & V^{-1} \end{bmatrix}.
% \end{align}

Similar to the previous development for the existence of the steady-state state covariance matrix ($\mathbf{\hat{X}} > 0 $) and stability of the system, we write:
% \begin{equation} 
% \mathbf{\hat{A}}\mathbf{\hat{X}} \mathbf{\hat{A}}^T+\mathbf{\hat{B}} \mathbf{W} \mathbf{\hat{B}}^T < \mathbf{\hat{X}}, \label{e:Est_stable}
% \end{equation}
% which using Schur's complement gives: 
\begin{equation}
\begin{bmatrix} 
\mathbf{\hat{X}} & \mathbf{\hat{A}} \mathbf{\hat{X}} & \mathbf{\hat{B}} \\
(\cdot)^{T}& \mathbf{\hat{X}} & O \\
(\cdot)^{T} & (\cdot)^{T} & \mathbf W^{-1} \end{bmatrix} > O, \label{e:Est_stable}
\end{equation}
and the error covariance can be bounded as:
% \begin{equation}
% \mathbf{\hat{C}} \mathbf{\hat{X}} \mathbf{\hat{C}}^T < \bar{\mathbf E},
% \end{equation}
% which can be written as:
\begin{equation}
    \begin{bmatrix}
   \bar{\mathbf E} & \mathbf{\hat{C}} \mathbf{\hat{X}}  \\ (\cdot)^{T} & \mathbf{\hat{X}} \end{bmatrix} > O .
\end{equation}

\noindent Again noticing that the constraint in Eqn.~(\ref{e:Est_stable}) is not an LMI, we follow a similar procedure, by partitioning the state covariance matrix and performing the congruence transformation to obtain the LMIs given in \cref{e:stable_est1,e:stable_est2}. Once the $\hat{X}, \hat{Y},\hat{U},\hat{F},\hat{Q} $ are obtained, matrix $\hat{S}$ can be constructed using:
\begin{equation}
\hat{S}=(I-\hat{Y} \hat{X})\hat{U}^{-1}, 
\end{equation}
and the estimator matrices can be constructed using \cref{e:stable_est_gain1,e:stable_est_gain2}.
\end{proof}

\subsection{Adversary with direct access to measurements}
For the case where the adversary uses his own sensors to measure the system output $y^a_k = y_k = C x_k + v_k$, the output privacy noise will not help in privatizing the system and the final design problem will change as follows.
\begin{theorem}
    For the dynamical system given in \cref{state_eqn1,output_eqn2} with an adversary using his own sensors to measure the system output, and to maximize differential privacy for a fixed performance bound, the optimal design solution with privacy noises, $W^p(\Gamma_w)$ and $V^p(\Gamma_v)$, an optimal state estimator (\cref{e:est1,e:est2}), and a general linear dynamic controller (\cref{e:Cntrl_dyn1,e:Cntrl_dyn2}), can be solved as a convex optimization problem using the following LMIs:
\begin{align*}
  \textit{minimize}_{\{A_c, B_c,C_c, \Gamma_w, \Gamma_v, \hat{A}, \hat{B}\}}  ~ \operatorname{trace}\left(\Gamma_w \right)  ,
\end{align*}
~~~~~~~~~~ $\mathbb{E}[z_k z_k^T]<\bar{\mathbf{Z}} \rightarrow$ (\cref{e:Closed_perf1,e:Closed_perf2} (LMIs),
\begin{equation}
    \begin{bmatrix}
\hat{X} & I & A \hat{X}  & A  & D &  B & O   \\
(\cdot)^{T} & \hat{Y} & \hat{Q} & \hat{Y} A + \hat{F} C  & \hat{Y} D & \hat{Y} B & \hat{F}  \\
(\cdot)^{T} & (\cdot)^{T} & \hat{X} & I  & O & O & O  \\
(\cdot)^{T} & (\cdot)^{T} & (\cdot)^{T} & \hat{Y}  & O & O & O  \\
(\cdot)^{T}  & (\cdot)^{T} & (\cdot)^{T} & (\cdot)^{T} & W^{-1} & O & O \\
(\cdot)^{T} &  (\cdot)^{T} & (\cdot)^{T} & (\cdot)^{T} & (\cdot)^{T} & \Gamma_w & O  \\
(\cdot)^{T} &  (\cdot)^{T} & (\cdot)^{T} & (\cdot)^{T} & (\cdot)^{T} & (\cdot)^{T} & V^{-1}  
\end{bmatrix}>O,
\end{equation}
\begin{equation}
\begin{bmatrix}
\bar{\mathbf E}  & \hat{X}-\hat{U} &  I \\
(\cdot)^{T} & \hat{X} & I \\
(\cdot)^{T} & (\cdot)^{T} & \hat{Y} 
\end{bmatrix}>O,
\end{equation}
\noindent and finally, the optimal estimator and dynamic controller can be calculated as \cref{e:stable_est_gain1,e:stable_est_gain2} and \cref{e:RLFQ}.
\end{theorem}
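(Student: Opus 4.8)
The plan is to mirror the proof of Theorem 1 closely, since the only structural change is how the adversary's measurement couples to the noises. First I would note that the closed-loop system generating the performance variable $z_k$ is \emph{identical} to the previous case: the controller still receives the fully privatized signal $\bar y_k = C x_k + v_k + v^p_k$, so the augmented plant $(\mathbf A, \mathbf B, \mathbf C)$ of \cref{eq:closedA,eq:closedC}, the steady-state covariance inequality $\mathbf A \mathbf X \mathbf A^T + \mathbf B \mathbf W \mathbf B^T < \mathbf X$, and the bound $\mathbf C \mathbf X \mathbf C^T < \bar{\mathbf Z}$ all carry over unchanged (again taking $D_c = O$). Applying the same partition of $\mathbf X$, the transformation $\{\mathbb T, \tilde{\mathbb T}\}$, and change of variables reproduces \cref{e:Closed_perf1,e:Closed_perf2} verbatim, and the controller gains follow from \cref{e:RLFQ}. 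This settles the performance side with no new work.

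Second, I would rebuild the estimator under the corrected measurement. Because the adversary now reads $y^a_k = y_k = C x_k + v_k$ rather than $\bar y_k$, the output-privacy channel $v^p$ never enters the recursion $\hat x_{k+1} = \hat A \hat x_k + B u_k + \hat B y_k$. Forming the augmented dynamics with $\hat{\mathbf x}^T := \begin{bmatrix} x^T & \hat x^T \end{bmatrix}$, the drift $\hat{\mathbf A} = \begin{bmatrix} A & O \\ \hat B C & \hat A \end{bmatrix}$ is unchanged, but the input becomes $\hat{\mathbf B} = \begin{bmatrix} D & B & O \\ O & O & \hat B \end{bmatrix}$, driven now by the \emph{three}-channel noise $\mathbf w^T := \begin{bmatrix} w^T & w^{p^T} & v^T \end{bmatrix}$ with $\mathbf W^{-1} = \operatorname{blkdiag}(W^{-1}, \Gamma_w, V^{-1})$. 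Writing $\hat{\mathbf A} \hat{\mathbf X} \hat{\mathbf A}^T + \hat{\mathbf B} \mathbf W \hat{\mathbf B}^T < \hat{\mathbf X}$, applying Schur's complement, and running the identical congruence transformation and change of variables yields the reduced seven-block LMI: the sole effect of excising $v^p$ is the deletion of the last block row and column (the $\Gamma_v$ slot) relative to \cref{e:stable_est1}. The bound on $\mathbf e_k = \hat{\mathbf C} \hat{\mathbf x}_k$ with $\hat{\mathbf C} = \begin{bmatrix} I & -I \end{bmatrix}$ is untouched and gives \cref{e:stable_est2}; $\hat S$ and the estimator gains then follow from \cref{e:stable_est_gain1,e:stable_est_gain2}.

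Finally, I would justify the revised objective $\operatorname{trace}(\Gamma_w)$. Since $v^p$ is absent from the estimator dynamics, the error covariance $\mathbb E[e_k e_k^T]$ — and hence the privacy metric — is influenced by $V^p$ only through its adverse effect on performance, never through a privacy gain. Maximizing privacy thus reduces to maximizing $W^p$, i.e.\ minimizing $\operatorname{trace}(\Gamma_w)$, while $\Gamma_v$ remains a free variable that the performance constraint \cref{e:Closed_perf1} drives toward large values, effectively removing the useless output noise $V^p \to O$.

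I expect the main obstacle to be the bookkeeping of this noise asymmetry: confirming that $v^p$ genuinely couples into the performance LMIs (through the controller's use of $\bar y_k$) yet is cleanly excised from the estimator LMI (through the adversary's use of $y_k$), so that the shared variables $\Gamma_w$ and $\Gamma_v$ enter the two subsystems consistently while only the estimator block sheds its $\Gamma_v$ column. Verifying that the same congruence transformation remains legitimate after this dimension drop — that no cross term involving $v^p$ was tacitly required — is the step most likely to conceal an error.
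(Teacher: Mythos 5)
Your proposal is correct and follows essentially the same route as the paper, which simply states that the proof carries over from Theorem 1 with $y^a_k = \bar{y}_k$ replaced by $y^a_k = y_k$; your expansion — unchanged performance LMIs, deletion of the $\Gamma_v$ block row and column from the estimator LMI, and the objective reducing to $\operatorname{trace}(\Gamma_w)$ — fills in exactly the details the paper leaves implicit.
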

\begin{proof}
    The design solution for the dynamic controller is the same as the previous solution. 
    The proof follows similarly to the previous design solution with $y^a_k = \bar{y}_k$ replaced with $y^a_k = {y}_k$ and thus the derivation for the equation for the estimator follows naturally from \cref{e:stable_est1,e:stable_est2} to the above-mentioned results.
\end{proof}

\begin{remark}
    Notice that the design solution presented in Theorem 1 and Theorem 2 is overparameterized, i.e., can help many different controllers/estimators to obtain the same result with different realizations of $YX+SU=I$ and that can further serve as an optimization domain for some other higher level objective function.
\end{remark}

\subsection{Estimator for unstable systems}
The discrete estimator design approach presented earlier is not valid for unstable systems as \cref{e:Est_stable} requires the system matrix $\mathbf{\hat{A}}$ to be stable for a  valid positive definite solution for $\mathbf{\hat{X}}>O$. Thus, we now present the results for unstable system dynamics which restricts the estimator design to: 
\begin{align*}
    \hat{x}_{k+1} =\hat{A} \hat{x}_{k}+ B u_k + \hat{B} y^a_{k}, ~\text{where} ~ \hat{A} = A-\hat{B}C,
\end{align*}
and thus $\hat{B}$ is the only design variable for the estimator. Notice that the control input will cancel out in the estimator as the adversary also has direct access to it. Combining the above estimator and the underlying dynamics given in \cref{state_eqn1,output_eqn1}, the error dynamics can be written as:
\begin{align}
\nonumber    e_{k+1} = (A-\hat{B}C) e_{k}+ D w_k + B w^p_k - \hat{B} v_{k} - \hat{B} v^p_{k}.
\end{align}

Now for the unstable dynamical system, the system design problem can be solved using the following results. Notice that the approach can also be used to only design an estimator and input/output privacy noises for the case of an open-loop unstable dynamical process.
\begin{corollary}
   For the unstable dynamical system given in \cref{state_eqn1,output_eqn2} with an adversary using his own sensors to measure the output, and to maximize differential privacy for a fixed performance bound, the optimal design solution with privacy noises, $W^p(\Gamma_w)$ and $V^p(\Gamma_v)$, an optimal state estimator of the form \cref{e:est1,e:est2}, and a general linear dynamic controller of the form \cref{e:Cntrl_dyn1,e:Cntrl_dyn2}, can be solved as a convex optimization problem using the following LMIs:
\begin{align*}
  \textit{minimize}_{\{A_c, B_c,C_c, \Gamma_w, \Gamma_v, \hat{B}\}}  ~ \operatorname{trace}\left(\Gamma_w + \Gamma_v \right)  ,
\end{align*}
~~~~~~~~~~ $\mathbb{E}[z_k z_k^T]<\bar{\mathbf{Z}} \rightarrow$ (\cref{e:Closed_perf1,e:Closed_perf2} (LMIs),
\begin{align}
% \min_{Y,Z,\Gamma}~ trace(\Gamma), ~~
&~~~~~~~~~~~~~\begin{bmatrix}
    \bar{\mathbf E}  & I\\I & \hat{Y}
\end{bmatrix} > O, \label{e:Est_per1} \\
&\begin{bmatrix}
    \hat{Y} & \hat{Y}A-\hat{Z} C & \hat{Y}D & \hat{Y}B & \hat{Z} &  \hat{Z}\\
    (\cdot)^T & \hat{Y} &O&O&O&O\\
    (\cdot)^T & (\cdot)^T & W^{-1} &O &O&O\\
    (\cdot)^T & (\cdot)^T & (\cdot)^T & \Gamma_w &O&O\\
    (\cdot)^T & (\cdot)^T & (\cdot)^T & (\cdot)^T& V^{-1} &O\\
    (\cdot)^T & (\cdot)^T & (\cdot)^T & (\cdot)^T&(\cdot)^T& \Gamma_v 
    \end{bmatrix} > O,\label{e:Est_per2}
\end{align}
\noindent and finally, the dynamic controller can be calculated as \cref{e:RLFQ} and optimal estimator as: $\hat{B} = \hat{Y}^{-1} \hat{Z}.$
\end{corollary}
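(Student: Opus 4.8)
The plan is to treat the performance (controller) part and the privacy (estimator) part separately, exploiting the fact that fixing $\hat A = A-\hat B C$ decouples the estimation error from the plant state. For the performance side I would reuse verbatim the construction of Theorem~1: form the closed-loop augmented system \cref{eq:closedA,eq:closedC} driven by the augmented noise with inverse covariance $\operatorname{blkdiag}(W^{-1},\Gamma_w,V^{-1},\Gamma_v)$, write the steady-state covariance inequality $\mathbf A\mathbf X\mathbf A^T+\mathbf B\mathbf W\mathbf B^T<\mathbf X$ together with $\mathbf C\mathbf X\mathbf C^T<\bar{\mathbf Z}$, and linearize both through the same partition of $\mathbf X,\mathbf X^{-1}$ and the congruence $\{\mathbb T,\tilde{\mathbb T}\}$. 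This reproduces \cref{e:Closed_perf1,e:Closed_perf2} and the controller recovery \cref{e:RLFQ} unchanged, since this loop never invokes stability of $A$, only that the designed controller renders the closed-loop $\mathbf A$ Schur.

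The genuinely new element is the estimator. I would start from the error recursion already displayed, $e_{k+1}=(A-\hat B C)e_k+D w_k+B w^p_k-\hat B v_k-\hat B v^p_k$, and observe that because $\hat A$ is fixed to $A-\hat B C$ the error is autonomous; its steady-state covariance $E=\mathbb E[e_k e_k^T]$ then obeys the standalone Lyapunov inequality $(A-\hat B C)E(A-\hat B C)^T+DWD^T+BW^pB^T+\hat B(V+V^p)\hat B^T<E$, which replaces the augmented condition \cref{e:Est_stable}. The point for unstable $A$ is that $\hat B$ can always be chosen so that $A-\hat B C$ is Schur, so a positive-definite $E$ exists even though the block-triangular augmentation of Theorem~1 (whose spectrum contains that of $A$) would be unstable.

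The main obstacle is that this inequality is bilinear in $(E,\hat B)$, with $\hat B$ entering both the dynamics matrix and the noise injection. I would linearize it by setting $\hat Y:=E^{-1}$, applying a Schur complement that separates the four noise channels, pre- and post-multiplying by $\hat Y$, and introducing the change of variable $\hat Z:=\hat Y\hat B$. Under this substitution $\hat Y(A-\hat B C)=\hat Y A-\hat Z C$ and $\hat Y\hat B=\hat Z$, so every bilinear term becomes affine in the decision variables $(\hat Y,\hat Z,\Gamma_w,\Gamma_v)$, producing exactly \cref{e:Est_per2}; taking the Schur complement back then recovers the original Lyapunov inequality with $E=\hat Y^{-1}$, confirming equivalence. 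The error-covariance bound is handled by writing $E=\hat Y^{-1}$ and taking a Schur complement on $\hat Y$, which yields the block form \cref{e:Est_per1}.

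Finally I would recover the design variables: the controller from \cref{e:RLFQ} exactly as in Theorem~1, and the estimator gain by inverting the change of variable, $\hat B=\hat Y^{-1}\hat Z$, which is well defined since $\hat Y\succ0$. Because $\Gamma_w,\Gamma_v$ enter every block linearly and the objective $\operatorname{trace}(\Gamma_w+\Gamma_v)$ is linear, the combined program is a single convex LMI optimization, which is the assertion of the corollary. I expect the only delicate bookkeeping to be verifying that the congruence-plus-substitution step is an exact equivalence (not merely a sufficient condition) and that the restricted estimator form incurs no loss relative to the general $(\hat A,\hat B)$ parameterization once $A$ is unstable.
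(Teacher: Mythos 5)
Your proposal is correct and follows essentially the same route as the paper: reuse the Theorem~1 machinery verbatim for the closed-loop performance LMIs \cref{e:Closed_perf1,e:Closed_perf2} and controller recovery \cref{e:RLFQ}, then write the standalone Lyapunov inequality for the error covariance of the fixed-structure estimator $\hat A = A-\hat B C$, apply a Schur complement, a congruence by $\hat Y = E^{-1}$, and the substitution $\hat Z=\hat Y\hat B$ to obtain \cref{e:Est_per1,e:Est_per2} with $\hat B=\hat Y^{-1}\hat Z$. Your added observation that the block-triangular augmentation of Theorem~1 inherits the spectrum of $A$ and hence fails for unstable plants is exactly the motivation the paper gives for restricting the estimator to this observer form.
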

\begin{proof}
We use the standard result that the error dynamics is stable and a steady-state error covariance matrix (${E} > 0 $) is bounded, if: ~~~~~~~~~~~~ $E<\bar{\mathbf E}$
\begin{align*}
     (A-\hat{B}C) E (A-\hat{B}C)^T  & + D W D^T    + B W^p B^T  \\ ~~~~~~~~~~~~~~~~~~~~~~ + \hat{B} V \hat{B}^T+ &\hat{B} V^p \hat{B}^T  < E,
\end{align*}
% ### Kalman Filter
% We will use the standard Kalman filter to estimate the system states. The Kalman filter estimate, $\hat{x}$, for the LTI system defined in \cref{state_eqn1} can be written as:
% \begin{align}
%     \hat{x}_{k+1 \mid k} =A \hat{x}_{k \mid k-1}+ B u_k+ \hat{B} \left(y_{k}
%     - C \hat{{x}}_{k \mid {k-1}}\right),
% \end{align}
% where $ \hat{B}$ is the Kalman gain given as:
% \begin{align}
% \hat{B} =     A P C^T \left( C P C^T+ V \right)^{-1}, \label{eq:KalmanG1}
% \end{align}
% where $P$ is the steady-state covariance matrix that satisfies
% \begin{equation}
%     P = APA^T - A P C^T \left( C P C^T+ V \right)^{-1}CPA^T + D W D^T
% \end{equation}
% Now, using the definition of gain matrix $\hat{B}$ from \cref{eq:KalmanG1}, we write the above equation as:
% \begin{equation}
%     P = (A-LC) P (A-LC)^T + \hat{B} V \hat{B}^T + D W D^T
% \end{equation}
% With the aim of minimizing $\operatorname{trace}(P)$, we use Comparison Lemma \cite{Scherer_LMI_1997} to write:
% \begin{equation}
%  \text{min.} \operatorname{trace}(P), ~~  P - (A-LC) P (A-LC)^T - \hat{B} V \hat{B}^T - D W D^T > O,
% \end{equation}
which can be written using Schur's complement:
 \begin{align*}
&~~~~~~~~~~~~~~~~~~~~~~~~ E < \bar{\mathbf E} , \\
& \begin{bmatrix}
    E & A-\hat{B} C & D & B & \hat{B} &  \hat{B} \\
    (\cdot)^T & E^{-1} &O&O&O&O\\
    (\cdot)^T & (\cdot)^T & W^{-1} &O &O&O\\
    (\cdot)^T & (\cdot)^T & (\cdot)^T & \Gamma_w &O&O\\
    (\cdot)^T & (\cdot)^T & (\cdot)^T & (\cdot)^T& V^{-1} &O\\
    (\cdot)^T & (\cdot)^T & (\cdot)^T & (\cdot)^T&(\cdot)^T& \Gamma_v 
    \end{bmatrix} > O.
\end{align*}

% Let us define $\hat{Y} = E^{-1}$ and multiply both sides by matrix $[\operatorname{blkdiag}(\hat{Y},I,I,I)]$ to apply congruence transformation and finally defining :
% \begin{align*}
% &~~~~~~~~~~~~~~~~~~~~~~~~ \hat{Y}^{-1} < \bar{\mathbf E} , \\
% &\begin{bmatrix}
%     \hat{Y} & \hat{Y}A-\hat{Y} \hat{B} C & \hat{Y}D & \hat{Y}B & \hat{Y}\hat{B} &  \hat{Y}\hat{B}\\
%     (\cdot)^T & \hat{Y} &O&O&O&O\\
%     (\cdot)^T & (\cdot)^T & W^{-1} &O &O&O\\
%     (\cdot)^T & (\cdot)^T & (\cdot)^T & \Gamma_w &O&O\\
%     (\cdot)^T & (\cdot)^T & (\cdot)^T & (\cdot)^T& V^{-1} &O\\
%     (\cdot)^T & (\cdot)^T & (\cdot)^T & (\cdot)^T&(\cdot)^T& \Gamma_v 
%     \end{bmatrix} > O.
% \end{align*}
Then, we defined $\hat{Y} = E^{-1}$ and multiplied both sides by matrix $[\operatorname{blkdiag}(\hat{Y},I,I,I)]$ to apply congruence transformation and finally defined $\hat{Z} = \hat{Y}\hat{B}$ to obtain the LMIs in $\hat{Y},\hat{Z},\Gamma_w$ and $\Gamma_v$ as \cref{e:Est_per1,e:Est_per2}.
\end{proof}

\begin{corollary}
        Similar to Corollary 1, but with an adversary using his own sensors to measure the system output, the optimal design problem can be solved as a convex optimization problem using the following LMIs:
        \vspace{-1mm}
\begin{align*}
  \textit{minimize}_{\{A_c, B_c,C_c, \Gamma_w, \Gamma_v, \hat{B}\}}  ~ \operatorname{trace}\left(\Gamma_w \right)  ,
\end{align*}
~~~~~~~~~~ $\mathbb{E}[z_k z_k^T]<\bar{\mathbf{Z}} \rightarrow$ (\cref{e:Closed_perf1,e:Closed_perf2} (LMIs),
\begin{align}
&~~~~~~~~~~~~~\begin{bmatrix}
    \bar{\mathbf E}  & I\\I & \hat{Y}
\end{bmatrix} > O, \label{e:Est_per11} \\
&\begin{bmatrix}
    \hat{Y} & \hat{Y}A-\hat{Z} C & \hat{Y}D & \hat{Y}B & \hat{Z} \\
    (\cdot)^T & \hat{Y} &O&O&O\\
    (\cdot)^T & (\cdot)^T & W^{-1}  &O&O\\
    (\cdot)^T & (\cdot)^T & (\cdot)^T & \Gamma_w &O\\
    (\cdot)^T & (\cdot)^T & (\cdot)^T & (\cdot)^T& V^{-1}  
    \end{bmatrix} > O.\label{e:Est_per21}
\end{align}
\noindent and finally, the dynamic controller can be calculated as \cref{e:RLFQ} and optimal estimator as: $\hat{B} = \hat{Y}^{-1} \hat{Z}.$
\end{corollary}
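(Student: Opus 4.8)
The plan is to follow the proof of Corollary~1 almost verbatim, exploiting the single structural change induced by the direct-measurement adversary: since the adversary now reads $y^a_k = y_k = C x_k + v_k$ rather than $\bar y_k$, the output privacy noise $v^p_k$ never enters the adversary's estimator and hence never enters the error dynamics. First I would substitute $y^a_k = y_k$ into the restricted estimator $\hat{x}_{k+1} = \hat A \hat{x}_k + B u_k + \hat B y^a_k$ with $\hat A = A - \hat B C$. Because the adversary also knows $u_k$, the control term cancels and the error $e_k = x_k - \hat{x}_k$ obeys $e_{k+1} = (A-\hat B C) e_k + D w_k + B w^p_k - \hat B v_k$. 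The only difference from the error recursion preceding Corollary~1 is the absence of the $-\hat B v^p_k$ term; this is exactly why $\Gamma_v$ is dropped from the objective and from the estimator LMI (though it is retained as a free variable in the performance constraint), and why the objective reduces to $\operatorname{trace}(\Gamma_w)$.

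Next I would write the standard condition for a bounded steady-state error covariance $E \succ 0$, namely $(A-\hat B C) E (A-\hat B C)^T + D W D^T + B W^p B^T + \hat B V \hat B^T \prec E$ together with the privacy requirement $E \prec \bar{\mathbf E}$. Applying the Schur complement to the quadratic terms (with middle diagonal blocks $E^{-1}$, $W^{-1}$, $\Gamma_w = W^{p^{-1}}$, and $V^{-1}$) yields a block matrix inequality that is one block row/column smaller than its Corollary~1 counterpart, precisely because the column carrying $\hat B$ against $v^p$ and the diagonal block $\Gamma_v$ are gone. This inequality is still nonlinear in $(E,\hat B)$, so I would linearize it by the congruence transformation $\operatorname{blkdiag}(\hat Y, I, I, I, I)$ with $\hat Y := E^{-1}$ and the change of variable $\hat Z := \hat Y \hat B$; the $(1,1)$ block becomes $\hat Y E \hat Y = \hat Y$, the $(1,2)$ block becomes $\hat Y(A-\hat B C) = \hat Y A - \hat Z C$, and the remaining first-row blocks pick up a left factor $\hat Y$, producing \cref{e:Est_per21}. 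The bound $E \prec \bar{\mathbf E}$ becomes $\bar{\mathbf E} - \hat Y^{-1} \succ 0$, which by one more Schur complement is \cref{e:Est_per11}.

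Finally I would recover the estimator gain by inverting the change of variable, $\hat B = \hat Y^{-1}\hat Z$, and note that the controller side is untouched: the controller still receives the privatized measurement $\bar y_k$, so the closed-loop existence and performance constraints \cref{e:Closed_perf1,e:Closed_perf2} and the controller reconstruction \cref{e:RLFQ} are inherited unchanged from Theorem~1 (the variable $\Gamma_v$ still appears there, reflecting that output noise degrades closed-loop performance even though it buys no privacy against this adversary). The main obstacle is conceptual rather than computational: one must justify that $v^p$ legitimately disappears from the adversary's error covariance while remaining in the plant and controller loop, so that $\Gamma_v$ is simultaneously absent from the estimator LMI and the objective yet present in the performance LMI. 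Once this asymmetry is pinned down, every remaining step is the routine Schur-complement-plus-congruence linearization already executed for Corollary~1.
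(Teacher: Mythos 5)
Your proposal is correct and follows essentially the same route as the paper, which gives no separate proof for this corollary and simply inherits the argument of Corollary~1 with the $v^p$ term (and hence the $\hat{Z}$ column and $\Gamma_v$ block) deleted from the error dynamics and the estimator LMI. Your Schur-complement and congruence steps, including the retention of $\Gamma_v$ in the closed-loop performance constraints, match the intended derivation.
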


\section{Simulation Results}\label{s:sim}
The Load Frequency Control (LFC) system maintains a balanced power distribution across different regions by continuously aligning energy demand with generation. LFC involves the transmission of data from remote areas to a central control center, and back to the power production facilities. This communication process in power grids has well-known privacy concerns and thus becomes the motivation for our example \cite{yazdani2022differentially,IEEE_Security}. We illustrate the performance of the proposed architecture on a connected four-area network which is obtained from a network-reduced IEEE New England 39-bus system \cite{bevrani2014robust}. We consider a lossless, connected, and network-reduced power system with each generator modeled by the following equation \cite{bevrani2014robust}:
\begin{align*}
    \dot{\theta}_i(t) &= \omega_i(t),\\
    M_i \dot{\omega}_i(t)  &= -D_i \omega_i(t)-\sum_{j=1}^n B_{ij} V_i V_j \sin(\theta_i(t)-\theta_j(t)) \\ &~~~~~~~~~~~~~~~~~~ +P_{t_i}(t)+w_{p_i}(t),\\
    \tau_{t_i} \dot{P}_{t_i}(t) &= -P_{t_i}(t)-R_i^{-1}\omega_i(t)+u_i(t),
\end{align*}
where $\theta_i(t)$ is the generator rotor angles w.r.t a synchronously rotating reference axis, $\omega_i(t)$ is the frequency deviation w.r.t a synchronous frequency which is $120\pi$ for a 60 Hz system, $M_i$ represents the inertia, $D_i$ represents the damping matrix, $w_{p_i}$ represents the unknown power demand modeled as disturbance, $R_i$ represents the frequency-droop, and $P_{t_i}(t)$ and $\tau_{t_i}$ are the turbine power and time constants, respectively \cite{bevrani2014robust}. 
We linearize the generator model and define the state with four-area network system as:
$$\dot{x}(t) = A_c x(t) + B_c u(t) + D w_p(t),$$
$$x = [\theta_1~\omega_1~P_{t_1} ~\theta_2~\omega_2~P_{t_2}~ \theta_3~\omega_3~P_{t_3}~ \theta_4~\omega_4~P_{t_4}]^T,$$
$$u(t) = [u_1(t)~u_2(t)~u_3(t)~u_4(t)]^T,$$
with the parameters for the networked system given in \cref{t:parameters}. Finally, we discretize the system dynamics with $A = e^{A_c \Delta t}$ and $B = \int_0^{\Delta t} e^{A_c\tau} B_c d \tau$, where $\Delta t$ is the sampling period.

\begin{table}[h!]
    \centering
    \vspace{-2mm}
    \caption{Network Parameters}
    \label{t:parameters}
    \begin{tabular}{c|cccc}
\hline Parameters & Area 1 & Area 2 & Area 3 & Area 4 \\
\hline$M_i$ & 0.1667 & 0.2222 & 0.16 & 0.1304 \\
$D_i$ & 0.0083 & 0.0088 & 0.0080 & 0.0088 \\
$R_i$ & $2.4$ & $2.7 $ & $2.5 $ & $2 $ \\
$\tau_t$ & 0.3 & 0.33 & 0.35 & 0.375 \\
\hline
\end{tabular}
    \vspace{-2mm}
\end{table}
\begin{figure}[h!]
    \centering
    \includegraphics[width=.5\linewidth]{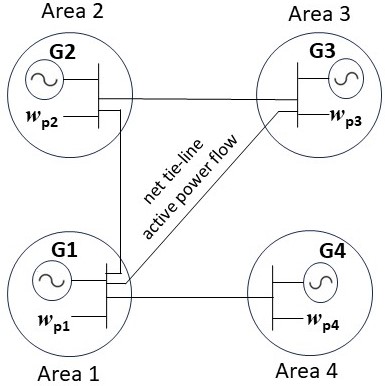}
    \caption{Interconnected four‐area power distribution system}
    \label{f:ConnectedNetwork}
        \vspace{-2mm}
\end{figure}
The communication graph structure is the same as the physical connection graph (\cref{f:ConnectedNetwork}), with all the per unit line voltages chosen to be $V_i=V_j=1$ and line coefficients of the power flow are taken as $B_{12}=B_{21}=B_{13}=B_{31}=$ $B_{23}=B_{32}=B_{14}=B_{41}=0.545$ p.u. and $B_{24}=B_{42}=B_{34}=$ $B_{43}=0$ \cite{bevrani2014robust}. We assume the measurement model to be graph Laplacian:
\begin{align*}
    y_1 &= (\theta_1-\theta_2)+(\theta_1-\theta_3)+(\theta_1-\theta_4),\\
    y_2 &= (\theta_2-\theta_3)+(\theta_2-\theta_1),\\
    y_3 &= (\theta_3-\theta_1)+(\theta_3-\theta_2),\\
    y_4 &= (\theta_4-\theta_1)+(\theta_4-\theta_{ref}).
\end{align*}
The above measurement model implies that each individual area measures the sum of the phase difference between itself and physically connected areas through net tie-line active power flow measurement. We assume that we can measure the absolute phase angle of area 4 by comparing it with known reference $\theta_{ref}=0$. We bound the deviation in turbine power by choosing the performance variable state as: $z_t = [P_{t_1}~P_{t_2}~P_{t_3}~P_{t_4}]$ and want to obtain the same level of privacy in frequency deviation for each area in the system $\omega_i$.

\begin{figure}[ht!]
    \centering
    \includegraphics[width=1\linewidth]{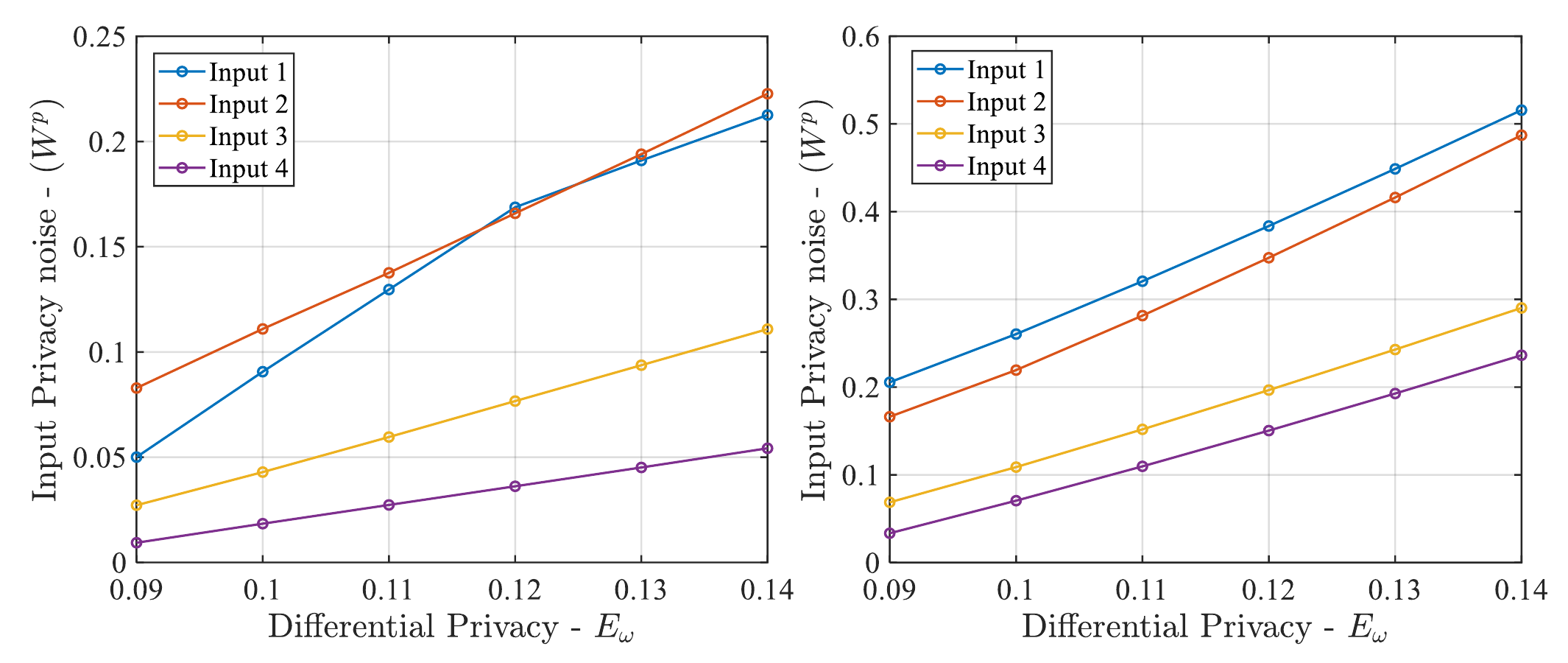}
    % \vspace{-2mm}
    \caption{Optimal input private noise for given values of differential privacy for (L) an adversary with access to communication channels ($y^a_k = \bar{y}_k$), and (R) an adversary with direct access to measurements ($y^a_k = y_k$).}
    \label{fig:InputNoise}
% \vspace{-1mm}
\end{figure}
\vspace{-1mm}
The design problem is to find the optimal privacy noise in control input and output channels and simultaneously design the controller to bound the covariance of the deviation in the turbine power $Z_{P_{t_i}}$ while preserving the privacy of the frequency deviation for each area $E_{\omega_i}$. Figure~\ref{fig:InputNoise} shows the optimal input privacy noises for each of the control inputs to obtain the same level of privacy for each of $E_{\omega_i}$. Notice that the intensity of the noise required is different in each channel based on the open loop and finally the closed-loop dynamics of each area. Also, the intensity of the noise increases with the level of privacy but there is a change in respective ratios of the noise intensity between different channels showing a non-scaled parameterization of the privacy noise and thus the need for the co-design of the noises and the controller. Also, notice that the amount of privacy noise required for the second case with ($y^a_k = {y}_k$) is more than the first case ($y^a_k = \bar{y}_k$) as there is no contribution from the output privacy noise towards the privacy of the system.

Figure~\ref{fig:OutputNoise} shows similar plots for the optimal output privacy noise for two cases that are based on adversarial capabilities. Notice that the amount of privacy signal required to obtain the desired performance bound increases with an increase in desired differential privacy. Moreover, the amount of privacy noise added in the output channel is much higher than the input channels as the output channel directly affects the state estimates and indirectly affects the system performance after passing through the controller dynamics, but the input channel directly affects the system performance and indirectly affects state estimates after passing through the system dynamics. Notice that the amount of noise added in the second case with ($y^a_k = {y}_k$) is zero as it does not help increase the differential privacy but adversely affects the system performance. 

\vspace{-2mm}
\begin{figure}[hb!]
    \centering
    \includegraphics[width=1\linewidth]{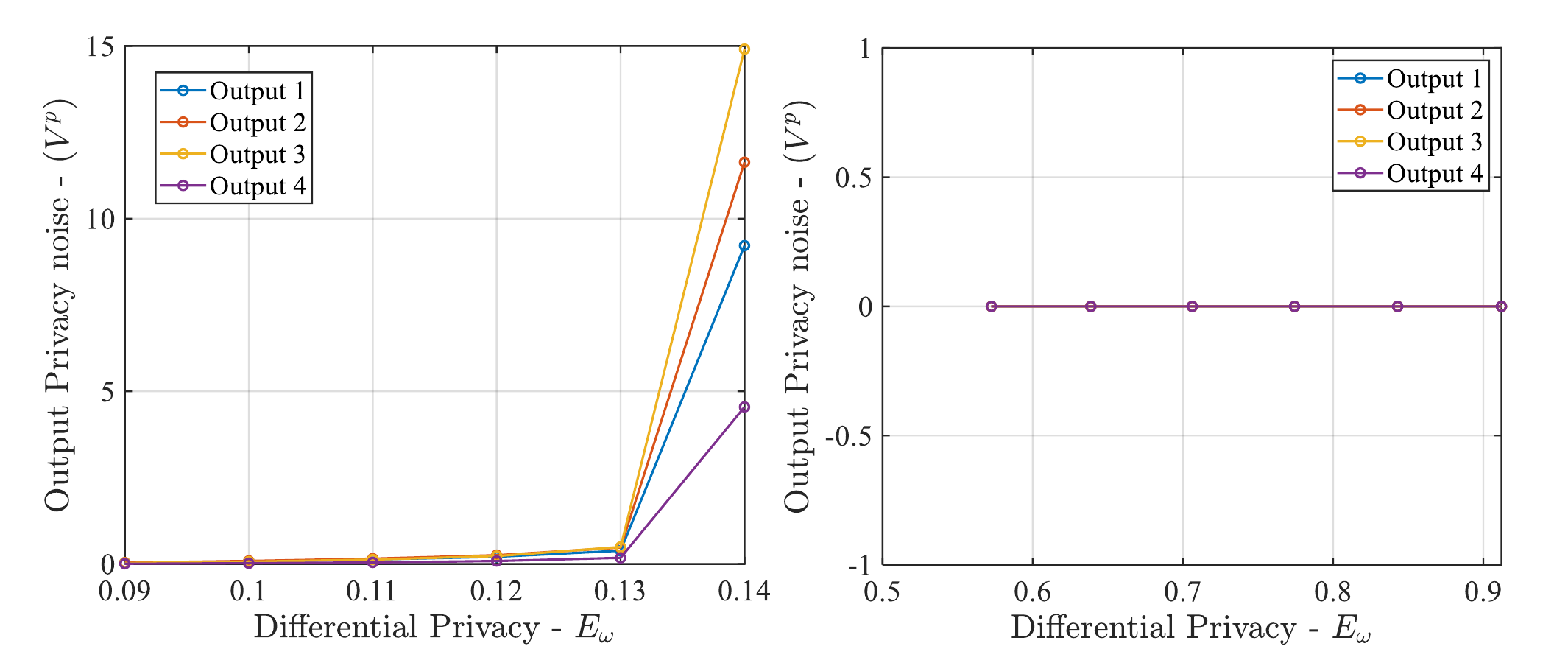}
    \caption{Optimal output private noise for given values of differential privacy for (L) an adversary with access to communication channels ($y^a_k = \bar{y}_k$), and (R) an adversary with direct access to measurements ($y^a_k = y_k$).}
    \label{fig:OutputNoise}
\vspace{-2mm}
\end{figure}
\vspace{-5mm}
\begin{figure}[ht!]
    \centering
    \includegraphics[width=1\linewidth]{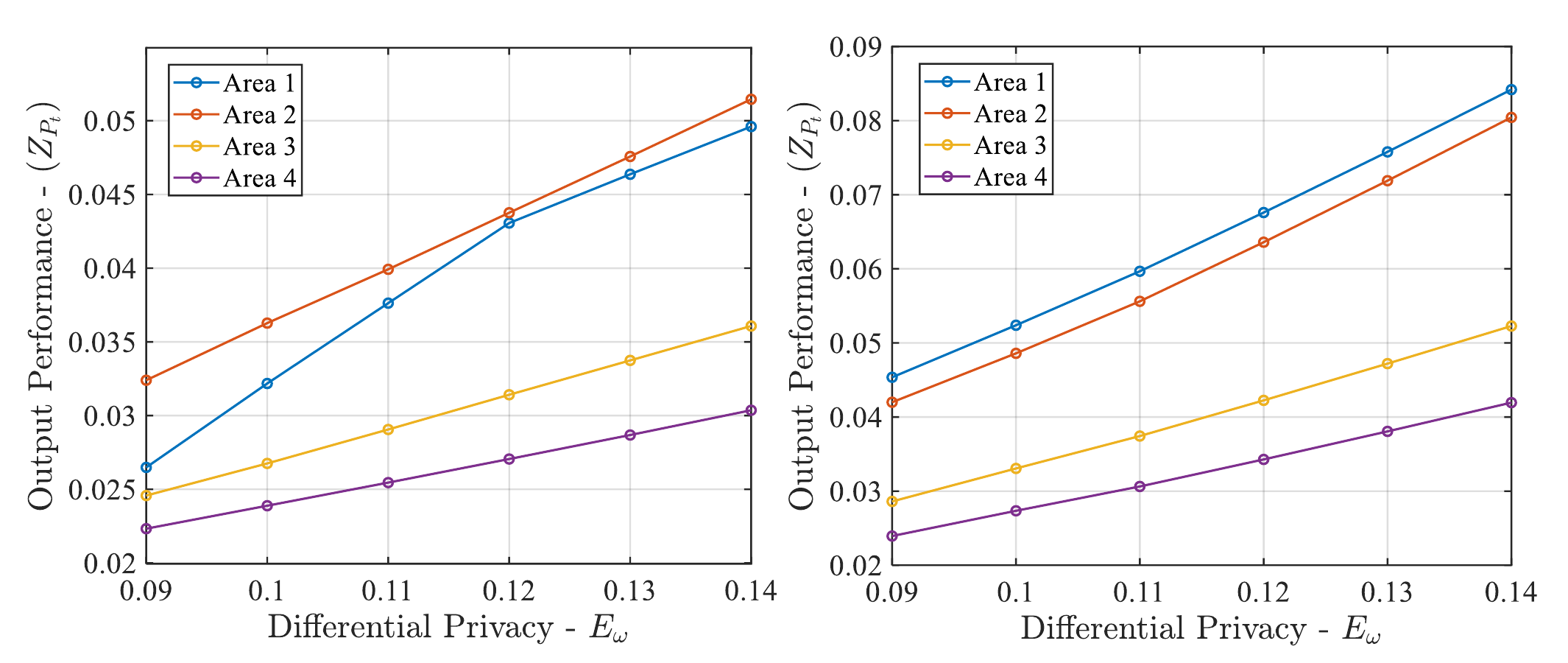}
    \caption{Optimal system performance norm for given values of differential privacy for (L) an adversary with access to communication channels ($y^a_k = \bar{y}_k$), and (R) an adversary with direct access to measurements ($y^a_k = y_k$).}
    \label{fig:Performance}
\vspace{-2mm}
\end{figure}
\vspace{-1mm}
Finally, \cref{fig:Performance} shows the plots for the system performance in terms of variance of deviation in turbine power for different areas. Notice that higher differential privacy results in a higher variance in deviation, i.e., an adverse effect on system performance. Notice that we minimized the performance loss for the given privacy level and thus different areas result in different performance levels. Also, the variance in deviation is higher for the second case with a stronger adversary with direct access to the measurement as only input privacy noise is effective in providing privacy which has a worse effect on system performance.

\vspace{-1mm}
\section{Conclusion}\label{s:conc}
\vspace{-1mm}
The paper showed that the joint design of differential privacy noise distribution and a general dynamic controller can be posed as a convex optimization problem using the Linear Matrix Inequalities framework. The framework adds privacy noise to both control input and system output to privatize the system’s state. The co-design problem also designs an optimal estimator from the perspective of the adversary with access to both communication channels and direct output measurements. 
The simulation results show the interplay between the controller gains and the privacy noise to obtain the desired level of privacy while minimizing the system performance as a measure of the variance of deviation from reference. The results show the effectiveness of input and output privacy noise based on the capabilities of the adversary and show the need for the co-design of the privacy noises with the controller.

\vspace{-1mm}
\bibliographystyle{IEEEtran}
\bibliography{Refs}

\end{document}